\newcommand{\TrAi}{\mathbf{Tr}(A^{-1})}
\newcommand{\Tr}{\mathbf{Tr}}
\newcommand{\onesN}{\mathbf{1}_N}
\newcommand{\colorred}[1]{\textcolor{black}{#1}}
\newcommand{\cred}{\color{black}}
\newcommand{\cblack}{\color{black}}
\title{Hierarchical probing for estimating the trace of the matrix inverse
on toroidal lattices
%\thanks{This work was supported by the National Science Foundation grant CCF-XXXXXXXX, the DOE SCIDAC-YYYYYY and the Jeffress Memorial Trust grant J-813.}
}
\author{Andreas Stathopoulos \footnotemark[2] 
\and Jesse Laeuchli \footnotemark[2]
\and Kostas Orginos \footnotemark[3]\ \footnotemark[4] }
\begin{document}
\maketitle
\renewcommand{\thefootnote}{\fnsymbol{footnote}}
\footnotetext[2]{Department of Computer Science, College of William and Mary, Williamsburg, Virginia 23187-8795, 
U.S.A.(amrehim@cs.wm.edu)} 
\footnotetext[3]{Department of Physics, College of William and Mary, Williamsburg, Virginia 23187-8795, 
U.S.A.(andreas@cs.wm.edu)}               
\footnotetext[4]{Jefferson National Laboratory, 12000 Jefferson Avenue, Newport News, Virginia, 23606, 
U.S.A.(kostas@jlab.org)}  
\renewcommand{\thefootnote}{\arabic{footnote}}
%%%%%%%%%%%%%%%%%%%%%%%%%%%%%%%%%%%%%%%%%%%%%%%%%%%%%%%%%%%%%%%%%%%%%%%%%%%%%%%%%%%%%%%%%%%%%%%%%%%%%%%%%%%%%%%%%%%%%%%120
\begin{abstract}
The standard approach for computing the trace of the inverse of a very large, 
  sparse matrix $A$ is to view the trace as the mean value of 
  matrix quadratures, and use the Monte Carlo algorithm to estimate it.
This approach is heavily used in our motivating application of Lattice QCD.
Often, the elements of $A^{-1}$ display certain decay properties 
  away from the non zero structure of $A$, but random vectors cannot 
  exploit this induced structure of $A^{-1}$.
Probing is a technique that, given a sparsity pattern of $A$, 
  discovers elements of $A$ through matrix-vector multiplications 
  with specially designed vectors.
In the case of $A^{-1}$, the pattern is obtained by distance-$k$ 
  coloring of the graph of $A$.
For sufficiently large $k$, the method produces accurate trace estimates
  but the cost of producing the colorings becomes prohibitively expensive.
More importantly, it is difficult to search for an optimal $k$ value, 
  since none of the work for prior choices of $k$ can be reused.

\cred
First, we introduce the idea of hierarchical probing that produces 
  distance-$2^i$ colorings for a sequence of distances
  $2^0, 2^1, \ldots , 2^m$, up to the diameter of the graph.
To achieve this, we do not color the entire graph, but at each level, $i$, 
  we compute the distance-1 coloring independently for each of the 
  node-groups associated with a color of the distance-$(2^{i-1})$ coloring.
Second, based on this idea, we develop an algorithm for uniform, toroidal 
  lattices that simply applies bit-arithmetic on local coordinates 
  to produce the hierarchical permutation.
Third, we provide an algorithm for choosing an appropriate sequence 
  of Hadamard and Fourier vectors, so that earlier vectors in the sequence 
  correspond to hierarchical probing vectors of smaller distances.
This allows us to increase the number of probing vectors until 
  the required accuracy is achieved.

Several experiments show that when a decay structure exists in the matrix,
  our algorithm finds it and approximates the trace incrementally, starting 
  with the most important contributions.
We have observed up to an order of magnitude speedup over the standard 
  Monte Carlo. 

\end{abstract}
%%%%%%%%%%%%%%%%%%%%%%%%%%%%%%%%%%%%%%%%%%%%%%%%%%%%%%%%%%%%%%%%%%%%%%%%%%%%%%%%%%%%%%%%%%%%%%%%%%%%%%%%%%%%%%%%%%%%%%%120
\begin{keywords}
Probing, trace of the inverse, sparse matrix, torus, Hadamard, Fourier, Lattice QCD 
\end{keywords}
%%%%%%%%%%%%%%%%%%%%%%%%%%%%%%%%%%%%%%%%%%%%%%%%%%%%%%%%%%%%%%%%%%%%%%%%%%%%%%%%%%%%%%%%%%%%%%%%%%%%%%%%%%%%%%%%%%%%%%%120
\begin{AMS}
65F15, 05B20, 81V05, 65C05, 65F50
\end{AMS}
%%%%%%%%%%%%%%%%%%%%%%%%%%%%%%%%%%%%%%%%%%%%%%%%%%%%%%%%%%%%%%%%%%%%%%%%%%%%%%%%%%%%%%%%%%%%%%%%%%%%%%%%%%%%%%%%%%%%%%%120
\pagestyle{myheadings}
\thispagestyle{plain}
\markboth{Stathopoulos, Laeuchli, Orginos}{Hierarchical probing for trace estimation}
%%%%%%%%%%%%%%%%%%%%%%%%%%%%%%%%%%%%%%%%%%%%%%%%%%%%%%%%%%%%%%%%%%%%%%%%%%%%%%%%%%%%%%%%%%%%%%%%%%%%%%%%%%%%%%%%%%%%%%%120
\section{Introduction}

A computationally challenging task in numerical linear algebra is the 
  estimation of the trace or the determinant of functions of matrices. 
The source of computational difficulty is twofold: first, the function of 
  the matrix or its action on a vector must be computed, and following this, 
  its trace must be estimated. 
We focus on the trace of the inverse of the matrix, $\TrAi$, but our 
  approach can be adapted to the determinant, as $\mathbf{Tr}({\log A})$, 
  or other functions.

These problems are common in many statistical applications 
  \cite{Hutchinson_90}, in data mining \cite{Bekas_diagonal},
  in uncertainty quantification \cite{Bekas_Uncertain},
  in optimal code design \cite{Optimal_design},
  as well as in quantum physics applications such as quantum Monte Carlo
  \cite{Eric_ratioDets}.
Our motivation for this work comes from some particularly computationally 
  intensive problems in lattice quantum chromodynamics (LQCD).
The goal of LQCD is to calculate the properties, structure, and interactions  
  of hadrons, the basic constituents of matter \cite{Gupta:1998,Rothe:2005}.
These calculations, together with experiments from particle accelerators,
  enable physicists to form a comprehensive picture of the subatomic world.
Computation of observables in LQCD entails averaging of correlation 
  functions over an ensemble of gauge fields, which in turn requires
  certain sections of the inverse of a matrix, the trace of the inverse, 
  or sometimes the ratio of determinants of two such matrices. 
The matrix discretizes a differential operator containing
  first and second order derivatives on a four dimensional 
  (space-time), uniform lattice with periodic conditions, 
  where each point (site) has an internal dimension of 12 related to the
  color and spin degrees of freedom.
Clearly, this is a very large and sparse matrix.

For problems of small size, $\TrAi$ can be computed efficiently 
  through variations of dense or sparse LU decomposition methods, 
  \colorred{
  \cite{Amestory_inverseEntries,Duff:1989}.
}
For larger problems stemming from some 2-D and 3-D discretizations, 
  algorithms have been developed using the banded structure, or based
  on preconditioning-type algorithms such as domain decomposition
\colorred{
  \cite{Lin-Car_traceInv, Lin-Chao_traceInv, Tang_Saad_traceInv}.
}
Such algorithms, however, focus on the accurate computation of the trace,
  and have complexities ${\cal O}(N^{3/2})$ or ${\cal O}(N^2)$ for 2-D or
  3-D problems respectively, where $N$ is the matrix size. 
\colorred{
For our 4-D LQCD problem, time and memory complexities are even higher,
  making such methods impractical for large $N$.
}
Moreover, this expense is not warranted in many problems, including LQCD,
  where only a low accuracy estimation of the trace is required.
Preconditioner-based approaches have also been proposed for related problems 
\cite{Ipsen_Lee_TR_determinants,Ipsen_Lee_zoneDeterminant,Reusken:2002:ApproxDeterm,Tang_Saad_traceInv} but their computational costs tend to be similar 
  to the above approaches.
\colorred{
On the other hand, the matrix trace, which can be viewed as an expectation 
  value (sum of all eigenvalues), is not approximated well by a projection 
  onto a Krylov space.
The Lanczos method, even in exact arithmetic, would require $N$ 
  iterations to calculate $\TrAi$. 
}

In the realm of such problem sizes, stochastic methods have become the 
  standard approach.
Given $f$ some function of $A$ and $s$ random vectors $z_k$ with 
  independent, identically distributed (i.i.d.) entries, 
  $\frac{1}{s} \sum_{k=1}^s z_k^T f(A)z_k$
 is an unbiased estimator of $\Tr(f(A))$ \cite{Hutchinson_90}.
Thus, we can set up a Monte Carlo process (MC) and at each step 
  approximate the Gaussian quadrature $z^T f(A)z$ using the Lanczos method
  \cite{Bai_Fahey_Golub,HongbinGuo_Renaut_Bilinear,Strakos_bilinear}.
MC converges slowly, as $O(1/\sqrt{s})$, but it is unbiased
  which is important in LQCD where the traces of a large sequence of
  matrices must be averaged.

The convergence rate of the MC method depends on the variance of the 
  estimator.
Choosing the vector entries as \colorred{$\mathbb{Z}_2$} noise 
  (i.e., $z(i) = \pm 1$ with probability $0.5$) is known to minimize variance
  over real random vectors \cite{Bernardson:1993yg,Hutchinson_90}.
Various choices of $\mathbb{Z}_2$ vectors have been proposed 
  \cite{Bekas_diagonal, Iitaka_trace, traceHongKong}.
In \cite{Toledo_trace}, the authors provide a thorough analysis of both 
  the variance and the number of steps required to achieve an 
  $\epsilon$-approximation of the trace for various choices of vectors.
They show that differences exist in the theoretical bounds governing 
  convergence, but their experiments do not reveal these differences in 
  practice.
As we can rarely afford to run MC to the asymptotic regime, 
  practical improvements can only come from deterministic variance 
  reduction mechanisms.  
Over the last decade there have been some efforts to equip MC with 
  specially selected vectors based on the structure of the matrix 
\cite{ Toledo_trace, Bekas_diagonal, HongbinGuo_traceFuncs, Tang_Saad_probing,
traceHongKong}.
In this research, we address the problems that prevented two of these 
  otherwise very promising methods from capturing the structure 
  of more general matrices.

The first method borrows ideas from coding theory and selects deterministic 
  vectors for the MC as columns of a Hadamard matrix \cite{Bekas_diagonal}.
These vectors are orthogonal and, although they produce the exact answer
  in $N$ steps, their benefit stems from systematically capturing
  certain diagonals of the matrix. 
For example, if we use the first $2^m$ Hadamard vectors,
  the error in the trace approximation comes only from non-zero elements on the 
  ($k2^m$)th matrix diagonal, $k=1,\ldots, N/2^m$. 
Thus, the MC iteration continues annihilating more diagonals with more 
  Hadamard vectors, until it achieves the required accuracy.
However, in most practical problems the matrix bandwidth is too large, 
  the non-zero diagonals do not fall on the required positions,
  or the matrix is not even sparse (which is typically the case for $A^{-1}$).

The second method is based on probing \cite{Tang_Saad_probing}, which 
  selects vectors that annihilate the error contribution from the
  heaviest elements of $A^{-1}$.
For a large class of sparse matrices, elements of $A^{-1}$ decay 
  exponentially away from the non-zero elements of $A$.
\colorred{
In other words, the magnitude of the $A^{-1}_{i,j}$ element relates to 
 the distance of the minimum path between nodes $i$ and $j$ in the graph of $A$.
}
Assume that the graph of $A$ has a distance-$k$ coloring 
  (or distance-1 coloring of the graph of $A^k$) with $m$ colors.
Then, if we define the vectors $z_j, j=1,\ldots, m$, 
  with $z_j(i) = 1$ if color($i$)=$j$, and $z_j(i) = 0$ otherwise, 
  we obtain $\Tr(A) = \sum_{j=1}^m z_j^TAz_j$.  
For $\TrAi$ the equation is not exact, \colorred{ 
  but it annihilates errors from
  all elements of $A^{-1}$ that correspond to paths between vertices 
  that are distance-$m$ neighbors in $A$.
}
The probing technique has been used for decades in the context
  of approximating the Jacobian matrix \cite{CoMo83a,Pothen_Coloring} 
  or other matrices \cite{Seifert_probing}.
Its use for approximating the diagonal of $A^{-1}$ in 
  \cite{Tang_Saad_probing} (see also \cite{Bekas_diagonal}) 
  is promising as it selects the important areas of $A^{-1}$ rather
  than the predetermined structure dictated by Hadamard vectors.
\colorred{
However, the accuracy of the trace estimate obtained through 
  a specific distance-$k$ probing can only be improved by applying 
  Monte Carlo, using random vectors that follow the structure 
  of each probing vector.
To take advantage of a higher distance probing, all
  previous work has to be discarded, and the method rerun for a larger $k$.
We discuss this in Sections \ref{subsection:probing} and \ref{section:HP}.
}

In this paper we introduce hierarchical probing which avoids the problems 
  of the previous two methods; namely it annihilates error stemming from 
  the heaviest parts of $A^{-1}$, and it does so in an incremental way 
  until the required accuracy is met. 
\colorred{
To achieve this, we relax the requirement of distance-$k$ coloring of 
  the entire graph.
The idea is to obtain recursively a (suboptimal) distance-$2^{i+1}$ coloring 
  by independently computing distance-1 colorings of the subgraphs 
  corresponding to each color from the distance-$2^i$ coloring.
The recursion stops when all the color-subgraphs are dense, i.e., 
  we have covered all distances up to the diameter of the graph.
We call this method, ``hierarchical coloring''.
The number of colors produced may differ between subgroups for general 
  matrices.
For lattices, however, each subgroup has the same number of colors,
  which enables an elegant, hierarchical basis for probing.
If we consider an appropriate ordering of the Hadamard and/or Fourier vectors
  and permute their rows based on the hierarchical coloring,
  the first $m$ such vectors constitute a basis for the 
  corresponding $m$ probing vectors.
We call this method, ``hierarchical probing''.
It can be implemented using only bit arithmetic, independently on 
  each lattice site.
}
We also address the issue of statistical bias by viewing hierarchical probing 
  as a method to create a hierarchical basis starting from any vector, 
  including random.

Section 2 describes the general idea of hierarchical coloring and probing.
In Section 3, we consider the case of uniform grids and tori 
  with sizes that are power of two, and develop a modified hierarchical 
  probing that uses only local coordinate information and bit operations 
  and produces the hierarchical probing vectors in parallel and 
  highly efficiently.
In Section \ref{sec:nopower2}, we extend the hierarchical coloring 
  to lattices with size that includes non-power of two factors.
In Section 4, we provide several experiments for typical lattices and 
  problems from LQCD that show that MC with hierarchical probing 
  has much smaller variance than random vectors and performs 
  equally well or better than the expensive, large distance probing method.

\subsection{Preliminaries}
We use vector subscripts to denote the order of a sequence of vectors,
  and parentheses to denote the index of the entries of a vector.
We use MATLAB notation to refer to row or column numbers and ranges.
The matrix $A$, of size $N\times N$, is assumed to have a
  symmetric structure (undirected graph).

\subsubsection{Lattice QCD problems}
Lattice Quantum Chromo-Dynamics  (LQCD) is a formulation of  
  Quantum Chromo-Dynamics (QCD) that allows for numerical calculations of 
  properties of strongly interacting matter 
  (Hadron Physics)~\cite{Wilson:1974sk}.  
These calculations are performed through Monte Carlo
  computations of the discretized theory on a finite 4 dimensional Euclidean lattice.  Physical results are obtained after extrapolation of the lattice spacing to zero. Hence calculations on multiple lattice sizes are required for taking the continuum and infinite volume limits. 
  In this formulation, a large sparse matrix $D$ called the Dirac
  matrix plays a central role. This matrix depends explicitly on the gauge fields $U$. The physical observables in a LQCD calculation
  are computed as averages over the ensemble of gauge field configurations.
    In various stages of the computation one needs, among other things,  to estimate the determinant as well as the trace of the inverse of this matrix. 
  The dimensionality of the matrix is $3\times 4 \times L_s^3 \times L_t$, where $L_s$ and $L_t$ are the dimensions of the spatial and temporal directions
  of the space-time lattice,   3 is the dimension of an internal space named ``color'', and 4 is the dimension of the space associated with the spin and particle/antiparticle degrees of freedom.
  Typical lattice sizes in todays calculations have $L_s=32$ and $L_t=64$ and the largest calculations performed on leadership class machines
  at DOE or NSF supercomputing centers have $L_s=64$ and $L_t=128$. As computational resources become available and precision requirements grow, lattices sizes will become even bigger.

\subsubsection{The Monte Carlo method for $\TrAi$}
Hutchinson introduced the standard MC method for estimating the trace of a 
 matrix and proved the following \cite{Hutchinson_90}.
\begin{lemma}
\label{lemma:MCvar}
Let $A$ be a matrix of size $N\times N$ and denote by $\tilde A = A-diag(A)$.
Let $z$ be a $\mathbb{Z}_2$ random vector (i.e., whose entries are i.i.d
Rademacher random variables $Pr(z(i) = \pm 1) = 1/2$).
Then, $z^TAz$ is an unbiased estimator of $\Tr(A)$, i.e.,
$$E(z^TAz) = \Tr(A),$$
and
$$ var(z^TAz) = \|\tilde A\|_F^2 = 
		2\left(\|A\|_F^2 - \sum_{i=1}^N A(i,i)^2\right).$$
\end{lemma}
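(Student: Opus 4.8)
The plan is to expand the quadratic form entrywise and reduce everything to the low-order moments of the Rademacher entries. Writing $z^T A z = \sum_{i=1}^N \sum_{j=1}^N A(i,j)\, z(i)\, z(j)$, the first ingredient is that independence of the entries together with $E[z(i)] = 0$ and $z(i)^2 = 1$ gives $E[z(i)\,z(j)] = \delta_{ij}$. By linearity of expectation the double sum then collapses onto the diagonal, $E(z^T A z) = \sum_{i=1}^N A(i,i) = \Tr(A)$, which establishes unbiasedness.

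For the variance I would compute $E\big((z^T A z)^2\big)$ and subtract $(\Tr A)^2$. Squaring the quadratic form yields the quadruple sum $\sum_{i,j,k,l} A(i,j)\,A(k,l)\, z(i)z(j)z(k)z(l)$, so the heart of the argument is the mixed fourth moment $E[z(i)z(j)z(k)z(l)]$. By independence, and since the odd moments of a Rademacher variable vanish while $E[z(i)^2] = E[z(i)^4] = 1$, this moment equals $1$ exactly when every index value appearing among $i,j,k,l$ appears an even number of times, and equals $0$ otherwise. The surviving patterns are: all four indices equal; $i = j \neq k = l$; $i = k \neq j = l$; and $i = l \neq j = k$. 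Collecting the associated contributions gives
\[ E\big((z^T A z)^2\big) = \sum_i A(i,i)^2 + \sum_{i\neq k} A(i,i)A(k,k) + \sum_{i\neq j} A(i,j)^2 + \sum_{i\neq j} A(i,j)A(j,i), \]
and the first two terms recombine into $\big(\sum_i A(i,i)\big)^2 = (\Tr A)^2$, so subtracting leaves $\mathrm{var}(z^T A z) = \sum_{i\neq j} A(i,j)^2 + \sum_{i\neq j} A(i,j)A(j,i)$. Using the symmetry of $A$ the two off-diagonal sums coincide, and re-expressing the result through the Frobenius norms $\|\tilde A\|_F$ and $\|A\|_F$ yields the closed forms recorded in the lemma.

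I expect the only step requiring genuine care to be the combinatorial bookkeeping in the fourth-moment evaluation: one must check that the four index patterns above are mutually exclusive --- in particular that the ``all indices equal'' case is not tacitly absorbed into the paired cases --- so that no term is double counted or omitted. Beyond that, everything follows from linearity of expectation and the first two nonzero moments of the Rademacher distribution; no limiting argument or inequality enters, since both assertions are exact identities.
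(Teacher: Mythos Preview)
The paper does not actually prove this lemma: it is stated with attribution to Hutchinson and used thereafter without argument, so there is no in-paper proof to compare against. Your entrywise expansion followed by the fourth-moment case analysis is exactly the standard derivation, and your bookkeeping of the four index patterns (all equal; $i=j\neq k=l$; $i=k\neq j=l$; $i=l\neq j=k$) is correct and exhaustive. One small remark: you invoke symmetry of $A$ at the end, which the lemma statement does not assume explicitly; this is harmless since $z^TAz = z^T\tfrac{1}{2}(A+A^T)z$, but it is worth saying. Note also that your computation actually yields $2\sum_{i\neq j}A(i,j)^2 = 2\|\tilde A\|_F^2$, which matches the rightmost expression in the lemma; the middle term $\|\tilde A\|_F^2$ as printed appears to be missing a factor of $2$, so do not be troubled if your answer does not literally reproduce that intermediate equality.
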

The MC method converges with rate $\sqrt{var(z^TAz)/s}$, where
  $s$ is the sample size of the estimator (number of random vectors). 
Thus, the MC converges in one step for diagonal matrices, 
  and very fast for strongly diagonal dominant matrices. 
More relevant to our $\TrAi$ problem is that large off-diagonal elements 
  of $A^{-1}$ contribute more to the variance $\| \widetilde{A^{-1}}\|_F^2$ 
  and thus to slower convergence.

Computationally, the Gaussian quadrature $z_kA^{-1}z_k$ can be computed 
  using the Lanczos method 
  \cite{Bai_Fahey_Golub,Golub_meurant_moments,Golub_Strakos_moments}.
This method produces also upper and lower bounds on the quadrature,
  which are useful for terminating the process.
A simpler alternative is to solve the linear system $A^{-1}z_k$.
Although this is not recommended for non-Hermitian systems because
  of worse floating point behavior \cite{Strakos_bilinear}, 
  for Hermitian systems it can be as effective if we stop the system
  earlier. 
Specifically, the quadrature error in Lanczos converges as the square 
  of the system residual norm \cite{Golub_meurant_moments},
  and therefore we only need to let the residual converge to the 
  square root of the required tolerance.
A potential advantage of solving $A^{-1}z_k$ is that the result can be 
  reused when computing multiple correlation functions involving 
  bilinear forms $y^T A^{-1}z_k$ (e.g., in LQCD).

\subsubsection{Probing}
\label{subsection:probing}
Probing has been used extensively for the estimation of 
  sparse Jacobians \cite{CoMo83a,Pothen_Coloring}, for preconditioning
  \cite{Seifert_probing}, and in Density Functional Theory 
  for approximating the diagonal of a dense projector whose elements
  decay away from the main diagonal \cite{Bekas_diagonal, Tang_Saad_probing}.
The idea is to expose the structure and recover the non-zero entries of 
  a matrix by multiplying it with a small, specially chosen set of vectors.
For example, we can recover the elements of a diagonal matrix through
  a matrix-vector multiplication with the vector of $N$ 1's, 
  $\onesN = [1, \ldots, 1]^T$.
Similarly, a banded matrix of bandwidth $b$ can be found by 
  matrix-vector multiplications with vectors $z_k, k=1,\ldots ,b$, where 
$$z_k(i) = \left\{
\begin{array}{ll}
1, & \mbox{for\ } i=k:b:N \\
0, & \mbox{otherwise}
\end{array}\right. .$$

To find the trace (or more generally the main diagonal) of a matrix, 
  the methods are based on the following proposition \cite{Bekas_diagonal}.
\begin{proposition}
\label{prop:structuralOrtho}
Let $Z\in \Re^{N\times s}$ be the matrix of the $s$ vectors used in the 
  MC trace estimator.
If the $i$-th row of $Z$ is orthogonal to all those rows $j$ of $Z$
  for which $A(i,j) \neq 0$, then the trace estimator yields the
  exact $\Tr(A)$.
\end{proposition}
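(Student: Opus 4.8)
The plan is to compute the trace estimator $\mathbf{Tr}(Z^T A Z)$ directly by expanding it over the rows of $Z$ and separating the contributions of pairs $(i,j)$ with $A(i,j)\neq 0$ from those with $A(i,j)=0$. Writing $r_i$ for the $i$-th row of $Z$ (a vector in $\Re^{1\times s}$), the key observation is that
\begin{equation*}
\mathbf{Tr}(Z^T A Z) = \sum_{i=1}^N \sum_{j=1}^N A(i,j)\, \langle r_i, r_j\rangle ,
\end{equation*}
which follows by writing out the $(k,k)$ diagonal entry of $Z^T A Z$ as $\sum_{i,j} Z(i,k) A(i,j) Z(j,k)$ and summing over $k$.

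Next I would split the double sum according to whether $A(i,j)=0$ or not. For the terms with $A(i,j)=0$ there is nothing to do — they contribute zero regardless of the inner products. For the terms with $A(i,j)\neq 0$, the structural orthogonality hypothesis gives $\langle r_i,r_j\rangle = 0$ whenever $i\neq j$, so the only surviving terms are the diagonal ones $i=j$, which contribute $\sum_i A(i,i)\,\langle r_i,r_i\rangle = \sum_i A(i,i)\,\|r_i\|^2$. To close the argument I need $\|r_i\|^2 = s$ (equivalently, that the estimator normalization $\tfrac{1}{s}\sum_k z_k^T A z_k$ is used, or that $Z$ has rows of unit norm up to the scaling absorbed in the definition of the MC estimator); in the setting of this paper the probing/Hadamard vectors have $\pm 1$ entries, so each row has squared norm equal to the number of columns $s$, and after dividing by $s$ we recover exactly $\sum_i A(i,i) = \mathbf{Tr}(A)$.

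The only real subtlety — and the step I would be most careful about — is the bookkeeping between the unnormalized quadratic form $Z^T A Z$ and the MC estimator $\frac{1}{s}\sum_{k=1}^s z_k^T A z_k$, i.e.\ making sure the row-norm factor is handled consistently with however $Z$ is normalized in the statement; once that convention is fixed the proof is a one-line expansion plus the orthogonality hypothesis. I would also note explicitly that only the structure (sparsity pattern) of $A$ is used, not its values, which is exactly why the same $Z$ works for a whole family of matrices sharing a pattern — the point that motivates using this for a sequence of LQCD matrices.
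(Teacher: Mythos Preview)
The paper does not actually prove this proposition; it is stated with a citation to \cite{Bekas_diagonal} and used as a known fact. Your argument is correct and is the standard one: expand $\Tr(Z^TAZ)=\sum_{i,j}A(i,j)\langle r_i,r_j\rangle$, kill the off-diagonal terms by the hypothesis, and use $\|r_i\|^2=s$ together with the $1/s$ normalization of the MC estimator to recover $\sum_i A(i,i)$.

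Two small remarks. First, the hypothesis as stated should be read as ``rows $j\neq i$ with $A(i,j)\neq 0$''; you implicitly use this, and it would be worth saying so explicitly, since a literal reading that includes $j=i$ forces $r_i=0$. Second, the paper later invokes the equivalent viewpoint $\Tr(Z^TAZ)=\Tr(AZZ^T)$, which makes the same point by looking at the pattern of $ZZ^T$ rather than at individual row inner products; this is not a different proof so much as a repackaging, but it is the form in which the result is actually applied in the Hadamard discussion that follows.
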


In the above example of a banded matrix, we choose the vectors $z_k$ 
  such that their rows only overlap for structurally orthogonal rows of $A$ 
  (i.e., for rows farther than $b$ apart). 
Thus the proposition applies and the trace computed with these $z_k$ is exact.

If $A$ is not banded but its sparsity pattern is known, graph coloring 
  can be used to identify the structurally orthogonal segments of rows, 
  and derive the appropriate probing vectors \cite{Tang_Saad_probing}.
Assume the graph of $A$ is colorable with $m$ colors,
  each color having $n(k)$ number of vertices, $k=1,\ldots, m$.
The coloring is better visualized if we let $q$ be the permutation vector 
  that orders first vertices of color 1, then vertices of color 2, and so on.
Then $A(q,q)$ has $m$ blocks along the diagonal, the $k$-th block is of 
  dimension $n(k)$, and each block is a diagonal matrix.
Figure~\ref{fig:ColorPerm} shows an example of the sparsity structure of
  a permuted 4-colorable matrix.
Computationally, permuting $A$ is not needed.
If we define the vectors:
\begin{equation}
\label{eq:colorProbingVecs}
z_k(i) = \left\{ \begin{array}{ll}
		1 & \mbox{if } color(i)=k\\
		0 & \mbox{otherwise}\\
		\end{array}\right., k=1,\ldots, m
\end{equation}
  we see that Proposition~\ref{prop:structuralOrtho} applies, and therefore
  $\Tr(A) = \sum_{k=1}^m z_k^TAz_k$. 

\begin{figure}[ht]
\centering
\includegraphics[width=0.5\textwidth]{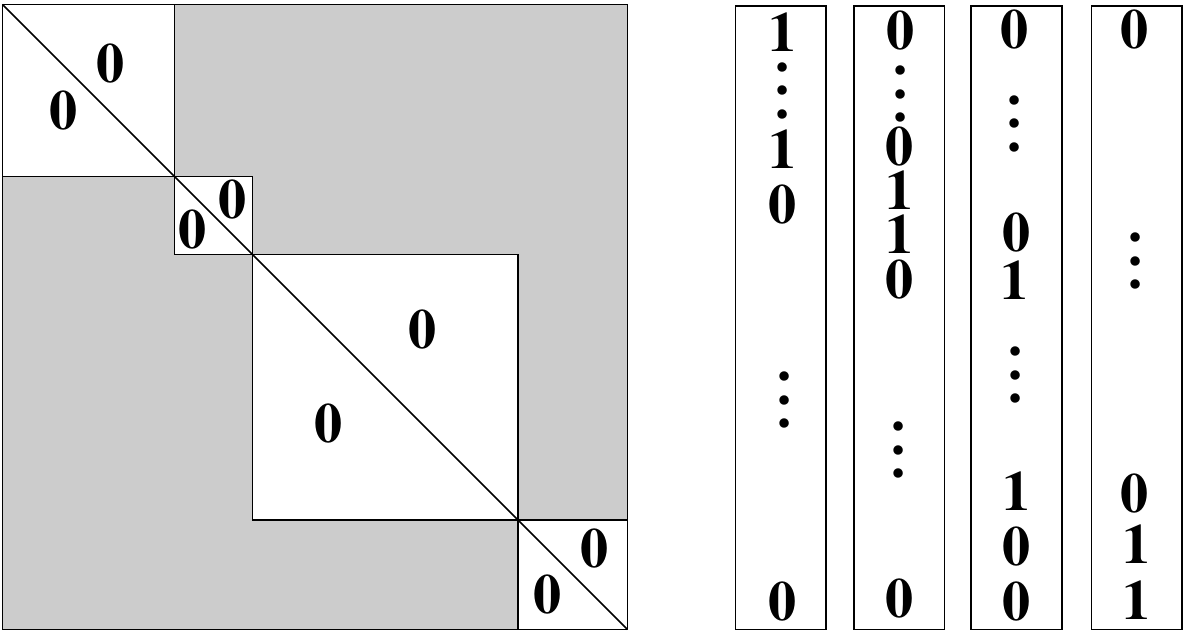}
\caption{Visualizing a 4-colorable matrix permuted such that 
	all rows corresponding to color 1 appear first, for color 2 appear 
  	second, and so on. Each diagonal block is a diagonal matrix. The four 
  	probing vectors with 1s in the corresponding blocks are shown 
	on the right.}
\label{fig:ColorPerm}
\end{figure}

When the matrix is dense and all its elements are of similar magnitude,
  there is no structure to be exploited by probing. 
The inverse of a sparse matrix is typically dense, but, 
  for many applications, its elements decay on locations that are 
  farther from the locations of the non-zero elements of $A$.
Such small elements of $A^{-1}$ can be dropped, and the remaining $A^{-1}$ 
  is sparse and thus colorable.
\colorred{
Diagonal dominance of the matrix is a sufficient (but not necessary)
  condition for the decay to occur \cite{CoMo83a,Tang_Saad_probing}.
}
This property is exploited by approximate inverse preconditioners
  and can be explained from various points of view, including
  Green's function for differential operators, the power series 
  expansion of $A^{-1}$, or a purely graph theoretical view
  \cite{ Benzi_Boito_Razouk_decay, Benzi_Golub_decay, EC_YS_97, THuckle_99}.
In the context of probing, we drop elements $A^{-1}(i,j)$ whose
  vertices $i$ and $j$ are farther than $k$ links apart in the graph of $A$.
Because this graph corresponds to the matrix $A^k$, 
  our required distance-$k$ coloring is simply the distance-1 coloring 
  of the matrix $A^k$ \cite{Pothen_Coloring,Tang_Saad_probing}.
Computing $A^k$ for large $k$, however, is time and/or memory intensive.

The effectiveness of probing depends on the decay properties of the 
  elements of $A^{-1}$, and the choice of $k$ in the distance-$k$ coloring. 
The problem is that $k$ depends both on the structure and the numerical
  properties of the matrix. 
If elements of $A^{-1}$ exhibit slow decay, choosing $k$ too small does not 
  produce sufficiently accurate estimates 
  because large elements of $A^{-1}$ (linking vertices that are farther 
  than $k$ apart) contribute to the variance in Lemma \ref{lemma:MCvar}.
Choosing $k$ too large increases the number of quadratures unnecessarily, 
  and more importantly, makes the coloring of $A^k$ prohibitive.
This problem has also been identified in \cite{Tang_Saad_probing} but 
  no solution proposed.

\colorred{
A conservative approach is to use probing for a small distance
  (typically 1 or 2) to remove the variance associated only with the
  largest, off-diagonal parts of the matrix.
Then, for each of the resulting $m$ probing vectors, we generate $s$ 
  random vectors that follow the non-zero structure of the corresponding
  probing vector, and perform $s$ MC steps (requiring $m s$ quadratures).
}
In LQCD this method is called dilution. 
In its most common form it performs a red-black ordering on the uniform
  lattice and uses the MC estimator to compute two partial traces:  
  one restricted on the red sites, the other on the black sites of the lattice 
  \cite{Bali:2009hu,Foley:2005ac,Morningstar_Peardon_etal_2011}.
Therefore, all variance caused by the direct red-black connections of 
  $A^{-1}$ is removed. 
The improvement is modest, however, so additional ``dilution''
  is required based on spin-color coordinates
	\cite{Babich:2011,Morningstar_Peardon_etal_2011}.

\subsubsection{Hadamard vectors}
An $N\times N$ matrix $H$ is a Hadamard matrix of order $N$ if it 
  has entries $H(i,j) = \pm 1$ and $HH^T = N I$, where $I$ is the identity
  matrix of order $N$ \cite{Bekas_diagonal,Hadamard_matrices}.
$N$ must be 1, 2, or a multiple of 4.
We restrict our attention to Hadamard matrices whose order is a power of 2,
  and can be recursively obtained as: 
$$H_2 = \left[ \begin{array}{cc}
	1 & 1\\
	1 & -1\\
	\end{array} \right],
\ \  H_{2n} = \left[ \begin{array}{cc}
				H_n & H_n\\
				H_n &-H_n\\
			    \end{array} \right] = H_2 \otimes H_{n}.$$ 
For powers of two, $H_n$ is also symmetric, and its elements can be obtained 
  directly as 
\begin{equation}
\label{eq:HadamardElements}
H_n(i,j) = (-1)^{\sum_{k=1}^{\log N} i_kj_k}, 
\end{equation}
  where $(i_{\log N},\ldots,i_1)_2$
  and $(j_{\log N},\ldots,j_1)_2$ are the binary representations of 
  of $i-1$ and $j-1$ respectively.
We also use the following notation to denote Hadamard columns (vectors):
  $h_j = H_n(:,j+1), j=0,\ldots, n-1$.
Hadamard matrices are often called the integer version of the discrete 
  Fourier matrices,
\begin{equation}
\label{eq:FourierElements}
F_n(j,k) = e^{2\pi (j-1) (k-1) \sqrt{-1}/n}.
\end{equation}
For $n=2$, $H_2 = F_2$, but for $n>2$, $F_n$ are complex.
These matrices have been studied extensively in coding theory where
  the problem is to design a code (a set of $s<N$ vectors $Z$) for which
  $ZZ^T$ is as close to identity as possible \cite{Bekas_diagonal}.
$H_n$ and $F_n$ vectors satisfy the well known Welch bounds but the
  $H_n$ do not achieve equality \cite{Welch_bounds}.
Moreover, $F_n$ are not restricted in powers of two.
Still, Hadamard matrices involve only real arithmetic, 
  which is important for efficiency and interoperability with real codes, 
  and it is easy to identify the non-zero pattern they generate.
Later, we will view the Hadamard matrix as a methodical way
  to build an orthogonal basis of a power of two size space.

Consider the first $2^k$ columns of a Hadamard matrix $Z=H(:,1:2^k)$.
The non-zero pattern of the matrix $ZZ^T$ consists of the $i2^k$ upper 
  and lower diagonals, $i=0,1,\ldots$ \cite{Bekas_diagonal}. 
Because $\Tr(Z^TA^{-1}Z) = \Tr(A^{-1}ZZ^T)$ and 
  because of Lemma~\ref{lemma:MCvar} and Proposition \ref{prop:structuralOrtho},
  the error in the MC estimation of the trace is induced only by 
  the off-diagonal elements of $A^{-1}$ that appear on the same locations
  as the non-zero diagonals of $ZZ^T$.
If the matrix is banded or its diagonals do not coincide with the ones 
  of $ZZ^T$, the trace estimation is exact.
When the off-diagonal elements of $A^{-1}$ decay exponentially away 
  from the main diagonal, increasing 
  the number of Hadamard vectors achieves a consistent (if not monotonic)
  reduction of the error.
We note that this special structure of $ZZ^T$ is achieved only when the
  number of vectors, $s$, is a power of two.
For $2^k < s < 2^{k+1}$, the structure of $ZZ^T$ is dense in general, 
  but the weight of $ZZ^T$ elements is largest on the main diagonal 
  (equal to $s$) and decreases between diagonals $i2^k$ and $(i+1)2^k$.
Thus, estimation accuracy improves with $s$, even for dense matrices.
However, to annihilate a certain sparsity structure of a matrix, 
  the estimates at only $s=2^k$ should be considered.
Similar properties apply for the $F_n$ matrices.

\section{Hierarchical probing}
\label{section:HP}
We seek to construct special vectors for the MC estimator 
  that perform at least as well as $\mathbb{Z}_2$ noise vectors, 
  but can also exploit the structure of the matrix, 
  when such structure exists.
Although Hadamard vectors seem natural for banded matrices, 
  they do not take into account deviations from the expected
  structure.
For example, the first two Hadamard vectors compute the exact 
  trace of a tridiagonal matrix. 
For the matrix that corresponds to a 2-D uniform lattice of size 
  $2^n\times 2^n$ with periodic boundary conditions and lexicographic
  ordering, producing the exact trace requires the first $s=2^{n+1}$ 
  Hadamard vectors.
However, if we consider the red-black ordering of the same matrix, 
  only two Hadamard vectors, the first $h_0$ and the middle $h_{2^{n-1}}$,
  are sufficient. 
This is shown in Figure \ref{fig:red-black-hada}.

\begin{figure}[t]
\centering
\begin{minipage}{0.2in}
\tiny
\begin{verbatim}
1     1
1    -1
1     1
1    -1
1     1
1    -1
1     1
1    -1
1     1
1    -1
1     1
1    -1
1     1
1    -1
1     1
1    -1
\end{verbatim}
\end{minipage}
\hspace{0.1in}
\begin{minipage}{3in}
\includegraphics[width=\textwidth,angle=0]{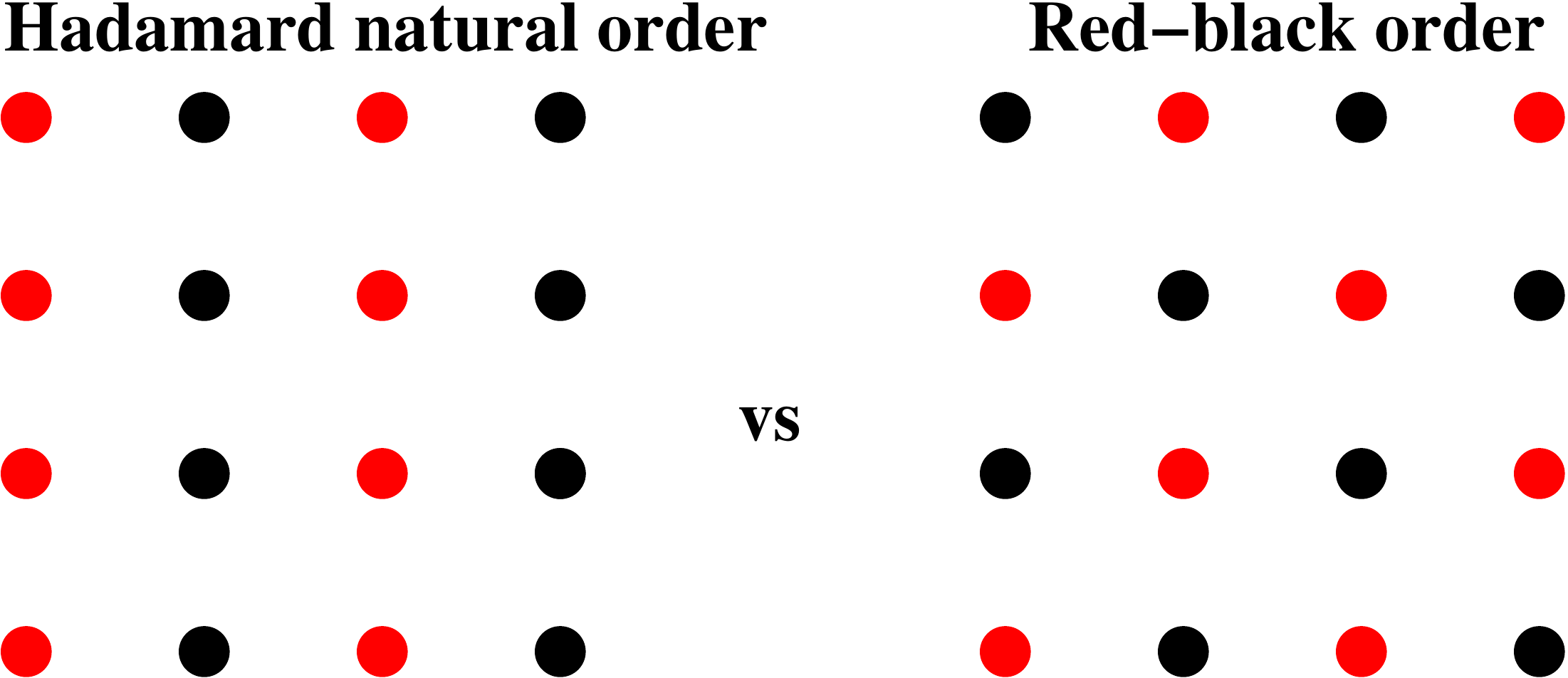}
\end{minipage}\hspace{0.1in}
\begin{minipage}{0.2in}
\tiny
\begin{verbatim}
1     1
1     1
1     1
1     1
1     1
1     1
1     1
1     1
1    -1
1    -1
1    -1
1    -1
1    -1
1    -1
1    -1
1    -1
\end{verbatim} 
\end{minipage}
\caption{Left: the first two natural order Hadamard vectors select some 
  neighboring vertices in the lexicographic ordering of a 2-D uniform lattice
  and thus cannot cancel their variance.
  Right: if the grid is permuted with the red nodes first, the first and 
  the middle Hadamard vectors completely cancel variance from nearest
  neighbors and correspond to the distance-1 probing vectors.}
\label{fig:red-black-hada}
\end{figure}

The previous example shows that although Hadamard vectors are a useful 
  tool, probing is the method that discovers matrix structure.
Therefore, we turn to the problem of how to perform probing efficiently 
  on $A^k$ and for large $k$.
Ideally, a method should start with a small $k$ and increase it until 
  it achieves sufficient accuracy.
However, the colorings, and therefore the probing vectors, for two 
  different $k$'s are not related in general.
Thus, in addition to the expense of the new coloring, all the quadratures 
  performed previously have to be discarded and new ones performed.
Our hierarchical probing provides an elegant solution to this problem.

First let us persuade the reader that work from a previous distance-$k$
  probing cannot be reused in general.
Assume the distance-1 coloring of a matrix of size 6 produced three colors:
  color 1 has rows 1 and 2,
  color 2 has rows 3 and 4,
  color 3 has rows 5 and 6.
Next we perform a distance-2 coloring of $A$, and assume there are four colors:
  color 1 has row 1, 
  color 2 has rows 2 and 3, 
  color 3 has rows 4 and 5, 
  color 4 has row 6.
As in Figure \ref{fig:ColorPerm}, the distance-1 and distance-2 
   probing vectors, $Z^{(1)}$ and $Z^{(2)}$ respectively, are the following:
$$
Z^{(1)} = \left[
  \begin{array}{ccc}
   1 & 0 & 0 \\
   1 & 0 & 0 \\
   0 & 1 & 0 \\
   0 & 1 & 0 \\
   0 & 0 & 1 \\
   0 & 0 & 1 \\
\end{array} \right],\
Z^{(2)} = \left[
  \begin{array}{cccc}
   1 & 0 & 0 & 0\\
   0 & 1 & 0 & 0\\
   0 & 1 & 0 & 0\\
   0 & 0 & 1 & 0\\
   0 & 0 & 1 & 0\\
   0 & 0 & 0 & 1\\
\end{array} \right].
$$
Unfortunately, the three computed quadratures $Z^{(1)T}A^{-1}Z^{(1)}$
  (or solutions to $A^{-1}Z^{(1)}$) cannot be used to avoid 
  recomputation of the four quadratures $Z^{(2)T}A^{-1}Z^{(2)}$.

Consider now a matrix of size 8 with two colors in its distance-1 coloring.
Assume that its distance-2 coloring produces four colors, and that all 
  rows with the same color belong also to the same color group for 
  distance-1.
Then the subspace of the corresponding probing vectors is spanned
  by certain Hadamard vectors:
\begin{equation*}
\begin{array}{ll}
Z^{(1)} = 		&			Z^{(2)} = \\
\left[ \begin{array}{cc}
   1 & 0 \\
   1 & 0 \\
   1 & 0 \\
   1 & 0 \\
   0 & 1 \\
   0 & 1 \\
   0 & 1 \\
   0 & 1 \\
\end{array} \right] 
\in span(
\left[ \begin{array}{rr}
1  &  1\\
1  &  1\\
1  &  1\\
1  &  1\\
1  & -1\\
1  & -1\\
1  & -1\\
1  & -1\\
\end{array} \right] 
),\
& 
\left[ \begin{array}{cccc}
   1 & 0 & 0 & 0\\
   1 & 0 & 0 & 0\\
   0 & 1 & 0 & 0\\
   0 & 1 & 0 & 0\\
   0 & 0 & 1 & 0\\
   0 & 0 & 1 & 0\\
   0 & 0 & 0 & 1\\
   0 & 0 & 0 & 1\\
\end{array} \right] 
\in span(
\left[ \begin{array}{rrrr}
 1  &  1  &  1  &  1 \\
 1  &  1  &  1  &  1 \\
 1  &  1  & -1  & -1 \\
 1  &  1  & -1  & -1 \\
 1  & -1  &  1  & -1 \\
 1  & -1  &  1  & -1 \\
 1  & -1  & -1  &  1 \\
 1  & -1  & -1  &  1 \\
\end{array} \right] 
).
\end{array}
\end{equation*}
The four Hadamard vectors are $h_0, h_4, h_2, h_6$.
More interesting than the equality of the spans is that the two bases
  are an orthogonal transformation of each other.
Specifically, $Z^{(1)} = 1/2 [h_0, h_4] H_2$
   and $Z^{(2)} = 1/2 [h_0, h_2, h_4, h_6] H_4$.
Because the trace is invariant under orthogonal transformations, 
  we can use the Hadamard vectors instead (as we implicitly did 
  in Figure \ref{fig:red-black-hada} for the lattice).
Clearly, for this case, the quadratures of the first two vectors
  can be reused so that the distance-2 probing will need computations for 
  only two additional vectors.
 
A key difference between the two examples is the nesting of colors
  between successive colorings.
In general, such nesting cannot be expected and thus an incremental 
  approach to probing will necessarily discard prior work.
A second difference is that all color groups are split into 
  the same number of colors in the successive coloring.
\colorred{
This property holds for lattices.
}
Our hierarchical probing method performs two tasks. 
First, it enforces a nested, all-distance coloring by making certain 
  compromises.
Second, it finds a set of Hadamard vectors that spans exactly 
  the probing basis for the above nested coloring 
  (i.e., the $Z^{(1)}, Z^{(2)},\ldots$ of the previous subsection).
\colorred{
When the matrix size is not a power of two, 
  Fourier vectors can be used instead.
}

\subsection{The hierarchical coloring algorithm}
\label{sec:hierPerm}
The goal is to provide a permutation of all matrix rows that 
  corresponds to a nested coloring for each probing distance.
Algorithm \ref{HPtask1} is based on the following ideas:
\begin{itemize}
\item We perform probing only for distance-$2^k$ colorings, but for 
  	all $k=0,1,\ldots$ until the number of colors is $N$ 
	(i.e., a dense graph).

\item As it is infeasible to color $A^{2^k}$ for all possible $k$,
  we perform hierarchical coloring independently on each submatrix 
  that corresponds to the row indices of a each color $i$. 
  The recursion continues until every element of the matrix is colored uniquely.

\item \colorred{
  We use any efficient greedy coloring algorithm (see \cite{Saad_linear}),
  and permute nodes of the same color together using Colors2Perm().
}
\end{itemize}

\begin{algorithm}
\caption{ \label{HPtask1} Hierarchical coloring}
\begin{tabbing}
xxx\=xxx\=ixx\=ixx\=xxxxxxxxxxx\=\kill
\% \> {\bf Input:} \\
\% \> Matrix block: Ablock. The first time this is the original matrix.\\
\% \> start, end: The global range of row indices of the current block. \\
\% \> n: The recursive level.\\
\% \> {\bf Input/Output:}\\
\% \> perm: Global permutation array reflecting the hierarchical coloring.\\
\% \>\>\> Update perm(start:end).\\
\% \> colorOffsets: Global array, initially empty. 
	Records (start, end, n) for all blocks\\
1.\> {\bf function} {\sc HierarchicalColoring}(Ablock, start, end, n) \\
2. \>\>{\bf global} colorOffsets 
		= [ colorOffsets ; (start, start+size(Ablock,1)-1, n) ]\\
3. \> \> [numColors newColors] = GreedyColor(Ablock)\\
4. \> \> {\bf if} (numColors=end-start+1) OR (numColors=1) {\bf then}  \\
5. \> \>\>  {\bf return} perm	\>\>\% dense or diagonal matrix. Do not reorder global rows \\
6. \> \> [perm(start : end) numRowsPerColor] = Colors2Perm(newColors)\\
7. \> \> colorStarts = start\\
8. \> \> {\bf for} (i = 1 : numColors)\\
9. \> \>\> colorEnds = colorStarts+numRowsPerColor(i)-1\\
10. \> \>\> rowIndices = perm(colorStarts : colorEnds)\\
11. \> \>\> AdiagBlock = Ablock(rowIndices, : )*Ablock( : , rowIndices)\\
12. \> \>\> HierarchicalColoring(AdiagBlock,colorStarts,colorEnds,n+1)
\end{tabbing}
\end{algorithm}
\colorred{
We initialize perm to the identity permutation, colorOffsets to empty, and call the algorithm at level 1 with 
  the original matrix $A$: {\sc HierarchicalColoring}($A,1,N,1$).
}
In line 2, we record (start,end) indices and recursion level for this block.
At the end, these offsets provide the number of nodes per color at every level.
In line 3, the local matrix is colored. 
Lines 4-5 are the base case of the recursion. 
We stop when the given subgraph is fully connected (could be just one node),
  or it is fully disconnected and is assigned one color.
In either case, the existing permutation is a valid ordering.
In line 6, the algorithm Colors2Perm() creates the local coloring 
  permutation array from the given local colors.
Note that this only reorders elements appearing between $start$ and $end$
  in the permutation array. 
Moreover, Colors2Perm() returns an array with the number of rows assigned
  to each color.
In line 8, we visit each color of this local matrix, and in line 11, 
  we compute $A^2$ {\em but only for the row indices of this color}.
\colorred{
This is the compromise step, where each subgraph is colored independently
  in the following levels.
}
Line 12 calls the algorithm recursively to produce a hierarchical coloring 
  only for the rows of the current color $i$.

\colorred{
Hierarchical coloring produces more colors at distance $2^k$ than 
  classic coloring of the graph of $A^{2^k}$
  (although the classic method would be computationally prohibitive 
  even for small $k$).
If the task were to approximate the trace of the matrix $A^{2^k}$, 
  the extra colors would be redundant and the additional probing 
  vectors would represent unnecessary computational work.
However, we approximate the trace of $A^{-1}$, which is dense.
Thus, the larger number of hierarchical probing vectors at distance $2^k$
  will also approximate some elements that represent
  node distances larger than $2^k$.
}

Line 11 performs a sparse matrix-matrix multiplication and could 
  result in a much denser matrix, AdiagBlock.
However, depending on the number of colors at the level, 
  AdiagBlock is of much smaller dimension than Ablock, 
  and only one such block for color $i$ is computed.
\colorred{
Although we can show that the additional memory requirements are limited, 
  such analysis is only relevant for general sparse matrices, and not 
  for lattices where no matrix is stored.
Based on Algorithm \ref{HPtask1} we derive later a lattice version 
  that requires no additional memory.
Exploiting the algorithm for general sparse matrices is the topic of 
  future research.
}

%For example, consider a 2-D uniform lattice.
%Every time we double the distance in $A^{2^k}$ for classic probing,
%  the number of nearest neighbors (and thus the number of nonzeros) 
%  increases by a factor of 4 \cite{Blaum98interleavingschemes}.
%In our algorithm, this number is much lower because nearest neighbors 
%  from previous levels have been separated in different colors.
%Moreover, as we see in the next section, doubling the distance in our 
%  algorithm requires 4 times more colors, which means that individual 
%  blocks will be 4 times smaller. 
%Hence, for the 2-D lattice, less than two times the storage of $A$ 
%  is actually required.

Finally, the algorithm computes the permutation $perm$ which 
  orders vertices with the same color in adjacent positions, 
  so that the permuted $A(perm,perm)$ is block diagonal as in
  Figure \ref{fig:ColorPerm}, but has also a similar hierarchical,
  off-diagonal structure.
To compute the quadratures, it is more convenient to permute the rows of the
  vector with the inverse permutation of $perm$, i.e., 
  $iperm(perm) = 1:N$.
Given a vector $z$, $z^T A(perm,perm) z = z(iperm)^T A z(iperm)$.

\subsection{Generating the probing basis}
\label{sec:ProbingSequence}
Assume for the moment that the hierarchy of colors generated by 
  Algorithm \ref{HPtask1} is a power of two, i.e., 
  the nodes are split in two colors at the first level, 
  then the vertices of each color are split into two colors, and so on
  until the base level. 
Obviously, this requires $N=2^m$.
Consider also the permuted colored matrix $A(perm,perm)$ so that colors
  appear in the block diagonal.
In the beginning of Section \ref{section:HP} we saw the Hadamard vectors required for probing
  the first two levels of this recursion for a $8\times 8$ matrix: 
  $[h_0, h_4]$ and $[h_0, h_4, h_2, h_6,]$.
If we denote by $\mathbf{1}_k = [1, \ldots, 1]^T$ the vector of $k$ ones,
  we note that these can be written as:
\begin{eqnarray*}
          [h_0, h_4] &=& H_2\otimes \mathbf{1}_4, \\
{[h_0, h_4, h_2, h_6]} &=& 
 \left[ H_2 \otimes H_2( : ,1),\ H_2 \otimes H_2(:,2)\right] \otimes\mathbf{1}_2. 
\end{eqnarray*}
This pattern extends to any recursion level $k=1,2,\ldots, \log_2 N$. 
If we denote by $Z^{(k)}$ the Hadamard vectors that span the $k$-th level 
  probing vectors, these are obtained by the following recursion:
\begin{eqnarray}
          \tilde Z^{(1)} &=& H_2, \nonumber \\
          \tilde Z^{(k)} &=& \left[ \tilde Z^{(k-1)} \otimes H_2(:,1),\
				    \tilde Z^{(k-1)} \otimes H_2(:,2)\right], \nonumber \\
	  Z^{(k)} & = &\tilde Z^{(k)} \otimes\mathbf{1}_{N/2^{k}}.
						\label{Hperm-recursive}
\end{eqnarray}
Intuitively, this says that at every level, we should repeat the pattern
  devised in the previous level to double the domains 
  for the first $2^{k-1}$ vectors (Kronecker product with $[1, 1]^T$), 
  and then should split each basic subdomain in two opposites 
  (Kronecker product with $[1,-1]^T$).

\colorred{
The hierarchy $Z^{(k-1)} = Z^{(k)}( : ,1:2^{k-1})$ implies that
  quadratures performed with $Z^{(k-1)}$ can be reused if we need 
  to increase the probing level.
}
To obtain the $m$-th probing vector, therefore, we can consider
  the $m$-th vector of $Z^{(\log_2 N)}$.
Its rows can be constructed piece by piece recursively through 
  (\ref{Hperm-recursive}) and without constructing all $Z^{(\log_2 N)}$.
In fact, we can even avoid the recursive construction and compute 
  any arbitrary element of $Z^{(\log_2 N)}(i,m)$ directly. 
This is useful in parallel computing where each processor generates 
  only the local rows of this vector.
The reason is that recursion (\ref{Hperm-recursive}) produces a 
  known permutation of the natural order of the Hadamard matrix, 
  specifically the column indices are:  
\begin{equation}
0, {N/2}, {N/4}, {3N/4}, {N/8}, {5N/8}, {3N/8}, {7N/8}, \ldots.
\label{HpermSequence}
\end{equation}
We can compute a-priori this column permutation array, $Hperm$, for all $N$,
  or for as many vectors as we plan to use in the MC estimator. 
Given also the inverse hierarchical permutation $iperm$ from the previous 
  section, \colorred{
  the $i$-th element of the $m$-th probing vector 
  can be be computed directly through (\ref{eq:HadamardElements}) as:
}
\begin{equation}
	z_m(i) = H_N(iperm(i), Hperm(m)). \label{row-col-permutation}
\end{equation}

\colorred{
We observe now that the assumption that each subgroup is colored with
  exactly two colors is not necessary.
The ordering given in (\ref{HpermSequence}) is the same if each subgroup 
  is colored by any power of two colors, which could be different 
  at different levels.
For example, the nodes mightbe split in four colors at the first level, 
  then the vertices of each color are split into eight colors, and so on.
The sequence (\ref{HpermSequence}) is built on the smallest increment 
  of powers of two and thus subsumes any higher powers.
}

We can extend the above ideas to generate the probing basis \colorred{
  for arbitrary $N$,
} 
  when at every level each color block is split into exactly the same 
  (possibly non-power of two) colors.
For example, at the first level we split the graph into 3 colors, 
  at level two, each of the 3 color blocks is colored with exactly 5 colors, 
  at level three, each of the 5 color blocks is colored with exactly 2 colors, 
  and so on. 
The problem is that Hadamard matrices do not exist for arbitrary 
  dimensions.
For example, for 3 probing vectors, there is no orthogonal basis $Z$ 
  of $\pm1$ elements, such that $ZZ^T = I$.
In this general case, we must resort to the $N$-th roots of unity, 
  i.e., the Fourier matrices $F_n$.

Assume that the number of colors at level $k$ is $c(k)$ for all 
  blocks at that level, then the probing basis is constructed
  recursively as:
\begin{eqnarray}
          \tilde Z^{(1)} &=& F_{C(1)}, \nonumber\\
          \tilde Z^{(k)} &=& \left[ \tilde Z^{(k-1)} \otimes F_{c(k)}( : ,1),\ldots, 
	\tilde Z^{(k-1)} \otimes F_{c(k)}( :,c(k))\right], \nonumber \\
          Z^{(k)} &=& \tilde Z^{(k)} \otimes\mathbf{1}_{N/\gamma_k}.
 					\label{Hperm-general-recursive}\\
	\mbox{ where } \gamma_k &=& \prod_{i=1}^{k} c(i). \nonumber
\end{eqnarray}
By construction, the vectors of $Z^{(k-1)}$ are contained in $Z^{(k)}$,
  and any arbitrary vector can be generated with a simple recursive algorithm.
However, we have introduced complex arithmetic which doubles the computational
  cost for real matrices.
\colorred{
On the other hand, if a $c(k)$ is a power of two, its $F_{c(k)}$ can 
  be replaced by $H_{c(k)}$. 
This can be useful when the non-power of two colors appear only at later
  recursion levels for which the number of probing vectors is large
  and may not be used, or when only one or two $F_{c(k)}$ will suffice.
}

\colorred{
To summarize, we have provided an inexpensive way to generate, for any 
  matrix size, an arbitrary vector of the hierarchical probing sequence through 
  (\ref{Hperm-general-recursive}), as long as 
  the number of colors is the same within the same level for each subgraph.
If, in addition, the matrix size and the color numbers are powers of two, 
  (\ref{HpermSequence}--\ref{row-col-permutation}) provide an even 
  simpler way to generate the probing sequence. 
In LQCD, many of the lattices fall in this last category.
}

We end this section with an open problem that is 
  reserved for future work, \colorred{
  as it is only encountered in general sparse matrices and not in lattices.
}
When blocks at the same level are not colored with the same number of 
  colors, the probing basis cannot be spanned incrementally 
  by Fourier or Hadamard matrices.
Consider for example two colors at level 1. At level 2, the first block is 
  split into 3 colors and the second color split in 2 colors.
Then, for the first level, $Z^{(1)}$ is:
\begin{equation*}
Z^{(1)} =\\
\left[ \begin{array}{cc}
   1 & 0 \\
   1 & 0 \\
   1 & 0 \\
   0 & 1 \\
   0 & 1 \\
\end{array} \right]  = 
\left[ \begin{array}{lr}
	\mathbf{1}_3 & \mathbf{1}_3\\
	\mathbf{1}_2 & -\mathbf{1}_2\\
\end{array} \right],
\end{equation*}
but $Z^{(2)}$ can only be given as $F_5$. 
Moreover, no two columns of $F_5$ span $Z^{(1)}$.
A possible solution for general matrices would be to 
  modify Algorithm \ref{HPtask1} to enforce 
  the same number of colors per block at the same level. 
At the lower levels (which are the farthest distances 
  in the graph) one may choose to stop the recursion and leave 
  the $\mathbf{1}_{N/\gamma_k}$ from (\ref{Hperm-general-recursive}),
  or replace it with a random vector of size $N/\gamma_k$.

\section{Hierarchical probing on lattices}
\colorred{
Uniform $d$-D lattices allow for a far more efficient
  implementation of the hierarchical coloring algorithm,
  based entirely on bit-arithmetic, and guarantee the existence
  of a hierarchical probing basis.
}

Consider first the 1-D case, where the lattice has $N=2^k$ points, where
$k = \log_2 N$, which guarantees the 2-colorability of the 1-D torus.
Any point has a coordinate $0\leq x \leq N-1$ with a binary representation:
   $[b_k, b_{k-1}, \ldots, b_1] =  \mbox{dec2bin}(x)$.
At the first level, the distance-1 coloring is simply red-black
  (we associate red with 0 and black with 1), and
  $x$ gets the color of its least significant bit (LSB), $b_1$. 
In the coloring permutation, we order first the $N/2$ red nodes.
At the second level, we consider red and black points separately and 
split each color again, but now based on the second bit $b_2$.
Thus, points $[**\ldots **00]$ and $[**\ldots **10]$ take different colors,
  and by construction all colors are given hierarchically.
The second level permutation will not mix nodes between the first two halves 
  of the first level, but will permute nodes within the respective halves,
  i.e., points with $0$ in the LSB always appear in the first
  half of the permutation.
The process is repeated recursively for each color, until all points have 
  a different color.

The binary tree built by the recursive algorithm splits the 
  points of a subtree in half at the $m$-th level based on $b_m$.
Thus, to find the final permutation we trace the path from the root to a leaf,
  producing the binary string: $[b_1 b_2 \ldots b_k]$, which is 
  the bit reversed string for $x$.
Denote by $P$ the final permutation array such that node $x=0,\ldots, N-1$ 
  in the original ordering is found in location $P(x)$ of the final 
  permutation.
This is the same inverse permutation $iperm$ of Section \ref{sec:hierPerm},
  only with index numbering starting at 0.
Then, $P(x) = $bin2dec(bitreverse(dec2bin($x$))) and the computation 
  is completely independent for any $i$.

Extending to torus lattices of $d$ dimensions,
  where $N = \prod_{j=1}^d 2^{k_j}$, has three complications:
First, the subgraph of the same color nodes is not a conformal uniform lattice.
Second, the geometry does not allow a simple bit reversal algorithm.
Third, not all dimensions have the same size ($k_j\neq k_i$).
The following sections address these.

\subsection{Splitting color blocks into conformal $d$-D lattices}
Consider a point with $d$ coordinates ($x_1, x_2, \ldots , x_d$).
Let $[b^j_{k_j}, \ldots , b^j_2, b^j_1]$
be the binary representation of coordinate $x_j$ with $0\leq x_j < 2^{k_j}$.
We know that uniform lattices are 2-colorable, so at the first level, 
  red black ordering involves the least significant bit 
  of all coordinates.
The color assigned to the point is mod$(\sum_{j=1}^d b^j_1, 2)$.
However, the red partition, which is half of the lattice points, 
is not a regular $d$-dimensional torus. 
Every red point is distance-2 away from any red neighbor, and therefore 
 it has more neighbors 
 (e.g., in case of 2-D it is connected with 8 neighbors, in 3-D with 18, and 
 so on). 
To facilitate a recursive approach, we observe
 that the reds can be split into $2^{d-1}$ $d$-dimensional sublattices, 
 if we consider them in groups of every other row in each dimension.
Similarly for the blacks.
For the 2-D case this is shown in Figure \ref{fig:2Dsplit}.
\begin{figure}[th]
\centering
\includegraphics[width=0.4\textwidth]{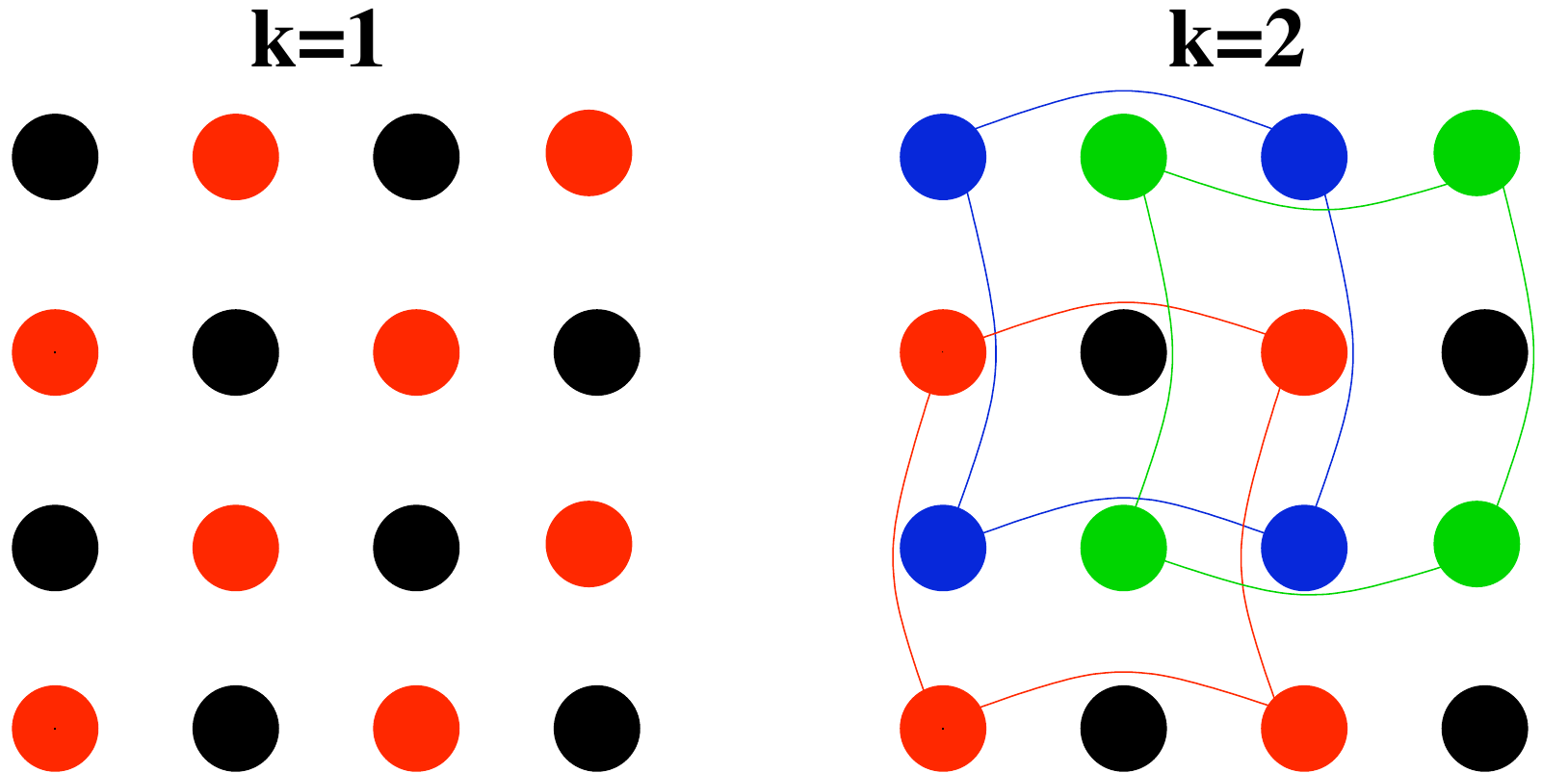}
\caption{When doubling the probing distance (here from 1 to 2) we 
  first split the 2-D grid to four conformal 2-D subgrids. Red nodes 
  split to two $2\times 2$ grids (red and green), and similarly 
  black nodes split to blues and black.
  Smaller 2-D grids can then be red-black ordered.
        }
\label{fig:2Dsplit}
\end{figure}

This partitioning is obtained based on the value of the binary string:
  $[b^1_1, b^2_1, \ldots , b^d_1]$.
For each value, the resulting sublattice contains all points with the given
  least significant bits in its $d$ coordinates.
Because each coordinate loses one bit, the size of each sublattice
  is $\prod_{j=1}^d 2^{k_j-1}$.
At this second level, each of the $2^d$ sublattices can be red-blacked 
  colored independently and with distinct colors for each sublattice.
As long as we remember which were the reds in the first level, 
  the coloring can be hierarchical.

\subsection{Facilitating bit reversal in higher dimensions}
The above splitting based on the LSBs from the $d$ coordinates does not 
order the adjacent colors together. For example, the partitioning at the 
first level of $d=2$ gives four sublattices (00,01,10,11) of which the 
00 and 11 are reds while 01 and 10 are blacks. 
We can recursively continue partitioning and coloring the sublattices.
However, if we concatenate at every level the new 2 bits from the 2 
  coordinates, as in the bit reversed pattern in the 1-D case,
  the resulting ordering is not hierarchical.
In our example, all red points in the first level are ordered in the first
  half, but at the second level, the colors associated with the 00 reds 
  will be in the first quarter of the ordering, while the colors associated
  with the 11 reds will be in the fourth quarter of the ordering.
Since the hierarchical ordering is critical for reusing previous work, 
  we order the four sublattices not in the natural order (00,01,10,11)
  but in a red black order: (00 11 01 10). 
Algorithm \ref{alg:slowRB} produces this Red-Black reordering in $d$ dimensions.

A more computationally convenient way to obtain the $RB$ permutation 
  is based on the fact that every point on the stencil has neighbors 
  of opposite color. In other words, $color( [x_1,\ldots ,x_d] ) = 
 \lnot color([x_1,\ldots, x_d]\pm e_j)$,
  where $e_j$ is the unit row-vector in the $j$ dimension, $j=1,\ldots, d$,
  and $\lnot$ is the logical not.
With two points per dimension, in one dimension the colors are $c_1 = [0, 1]$.
Inductively, if the colors in dimension $d-1$ are $c_{d-1}$, the second $d-1$ 
  plane in dimension $d$ will have the opposite colors, and thus:
  $c_d = [ c_{d-1}, \lnot c_{d-1}]$.
Therefore, we can create the $RB$ with only a check per point instead
  of counting coordinate bits.
This is shown in Algorithm \ref{alg-rb1}.

\noindent
\begin{minipage}[t]{0.47\textwidth}
\begin{algorithm}[H]
\caption{\label{alg:slowRB} Red-Black order of the $2^d$ torus (slow)}
\begin{tabbing}xxx\=xxx\=xxx\=\kill
$RB$ = bitarray($2^d,d$)\\
$reds = 0,\ blacks = 2^{d-1}$\\
{\bf for} $i=0:2^d-1$\\
\>   {\bf if} dec2bin($i$) has even number of bits \\
\>\>      $newbits$ = dec2bin($reds, d$) \\
\>\>      $reds = reds+1$ \\
\>   {\bf else} \\
\>\>      $newbits$ = dec2bin($blacks, d$) \\
\>\>      $blacks = blacks+1 $\\
\>   $RB(i,:) = newbits$
\end{tabbing}
\end{algorithm}
\end{minipage} \ \ \
\begin{minipage}[t]{0.47\textwidth}
\begin{algorithm}[H]
\caption{ \label{alg-rb1} Red-Black order of the $2^d$ torus (fast)}
\begin{tabbing}xxx\=xxx\=xxx\=\kill 
$c_0 = 0$\\
{\bf for} $j=1:d$\\
\>   $c_j = [c_{j-1}, \lnot c_{j-1}]$ \\
$RB$ = bitarray($2^d,d$)\\
$reds = 0,\ blacks = 2^{d-1}$\\
{\bf for} $i=0:2^d-1$\\
\>   {\bf if} $c_d(i) == 0$  \\
\>\>      $newbits$ = dec2bin($reds, d$) \\
\>\>      $reds = reds+1$ \\
\>   {\bf else} \\
\>\>      $newbits$ = dec2bin($blacks, d$) \\
\>\>      $blacks = blacks+1 $\\
\>   $RB(i,:) = newbits$
\end{tabbing}
\end{algorithm}
\
\end{minipage}

We are now ready to combine the Red-Black reordering 
  with the bit-reversing scheme to address the $d$-dimensional case.
First, assume that the lattice has the same size in each dimension, 
i.e., $k_j = k, \forall j=1,\ldots , d$. Then the needed permutation 
is given by Algorithm \ref{alg-perm0}.

\begin{algorithm}
\caption{ \label{alg-perm0} 
Hierarchical permutation of the lattice -- case $k_j=k$}
\begin{tabbing}xxx\=xxx\=xxxi\=\kill 
\% {\bf Input:}\\
\% the coordinates of a point $x = (x_1, x_2, \ldots , x_d)$\\
\% the global $RB$ array produced by Algorithm \ref{alg-rb1}\\
\% {\bf Output:}\\
\% The location in which $x$ is found in the hierarchical permutation\\
{\bf function} loc = {\sc LatticeHierPermutation0}($(x_1, x_2, \ldots , x_d)$)\\
\>  \% Make a $d\times k$ table of all the coordinate bits\\
\>  {\bf for} $j=1:d$\\
\>\>    $(b^j_k, \ldots , b^j_2, b^j_1) = \mbox{dec2bin}(x_j)$\\
\>  \% Accumulate bit-reversed order. Start from LSB\\
\>  $loc = [\ ]$ \\
\>  {\bf for} $m=1:k$\\
\>\>   \% A vertical section of bits. Take the m-th bit of all coordinates\\
\>\>   \% and permute it to the corresponding red-black order \\
\>\>   $(s^1,\ldots ,s^d) = 
	RB(\mbox{bin2dec}(b^1_m, b^2_m, \ldots , b^d_m))$\\
\>\>   \% Append this string to create the reverse order string\\
\>\>	$loc = [loc,~~(s^1,\ldots ,s^d) ]$\\
\>  return bin2dec($loc$)
\end{tabbing}
\end{algorithm}

\subsection{Lattices with different sizes per dimension}
At every recursive level, our algorithm splits the size of each dimension 
  in half (removing one bit), until there is only 1 node per dimension.
When the dimensions do not all have the same size, 
  some of the dimensions reach 1 node first and beyond that point 
  they are not subdivided.
If $m$ from the $d$ dimensions have reached size 1, 
  the above algorithm should continue as in a $d-m$ dimension lattice, 
  at every level concatenating only the active $d-m$ bits in $loc$.
In this case, however, the red-black permutation $RB$ should 
  correspond to that of a $d-m$ dimensional lattice.
The following results allow us to avoid computing and storing 
  $RB_j$ for each $j=1,\ldots ,d$. 
As before, we consider $c_d$ the array of 0/1 colors of the 
  2-point, $d$-dimensional torus.
\begin{lemma}
\label{lemma-doubledim}
For any $d>0$, $c_d(2i) = c_{d-1}(i), \forall i=0,\ldots , 2^{d-1}-1$.
\end{lemma}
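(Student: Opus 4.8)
The plan is to prove the identity by induction on $d$, using nothing but the defining recursion $c_0 = 0$, $c_j = [c_{j-1}, \lnot c_{j-1}]$, and keeping in mind that all array indices are $0$-based so that $c_j$ has entries $c_j(0),\ldots,c_j(2^j-1)$. For the base case $d=1$ the admissible range of $i$ is just $\{0\}$, and $c_1(0) = 0 = c_0(0)$ directly from $c_1 = [c_0,\lnot c_0] = [0,1]$.

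For the inductive step I would assume the statement for $d-1$, i.e. $c_{d-1}(2i') = c_{d-2}(i')$ for all $0\le i'\le 2^{d-2}-1$, fix $0\le i\le 2^{d-1}-1$, and split on whether $2i$ lands in the first or the second half of the length-$2^d$ array $c_d = [c_{d-1},\lnot c_{d-1}]$. If $i<2^{d-2}$ then $2i<2^{d-1}$, so $c_d(2i) = c_{d-1}(2i)$, which equals $c_{d-2}(i)$ by the inductive hypothesis; and applying the same half-split $c_{d-1} = [c_{d-2},\lnot c_{d-2}]$ to the right-hand side gives $c_{d-1}(i) = c_{d-2}(i)$ since $i<2^{d-2}$, so the two sides agree. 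If $2^{d-2}\le i\le 2^{d-1}-1$, write $i = 2^{d-2}+i'$ with $0\le i'<2^{d-2}$, so $2i = 2^{d-1}+2i'$ lies in the second half; then $c_d(2i) = \lnot c_{d-1}(2i') = \lnot c_{d-2}(i')$ by the inductive hypothesis, while $c_{d-1}(i) = c_{d-1}(2^{d-2}+i') = \lnot c_{d-2}(i')$, and again the two sides match.

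An alternative and perhaps cleaner route is to first observe, by an easy induction on $d$, that $c_d(i)$ is the parity of the number of $1$-bits in the binary expansion of $i$: the top bit (among $d$ bits) is $0$ on the first half $[c_{d-1},\cdot]$ and $1$ on the second half $[\cdot,\lnot c_{d-1}]$, and the negation flips the parity accordingly. Granting this, the lemma is immediate: the binary expansion of $2i$ is that of $i$ with a trailing $0$ appended, hence has the same bit-count parity, and since $i\le 2^{d-1}-1$ fits in $d-1$ bits this parity is exactly $c_{d-1}(i)$.

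I do not expect a genuine obstacle here; the only care needed is the off-by-one bookkeeping forced by the $0$-based indexing, and making sure the "first half vs.\ second half of $c_d$" case split on the left is aligned with the half-split of $c_{d-1}$ used on the right. The parity characterization sidesteps even that, at the small cost of one extra preliminary induction, so I would likely present it as a remark alongside the direct argument.
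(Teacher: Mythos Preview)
Your main inductive argument is correct and is essentially the paper's proof: induction on $d$ with a case split on whether $2i$ lands in the first or second half of $c_d=[c_{d-1},\lnot c_{d-1}]$, invoking the inductive hypothesis on $c_{d-1}(2i')$. You are actually more careful than the paper, which starts the base case at $d=2$ and only writes out the second-half case explicitly (covering the first half by the remark that the first half of $c_d$ equals $c_{d-1}$); your treatment of both halves and the $d=1$ base case is cleaner bookkeeping but the same idea.

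Your alternative via the parity-of-bit-count characterization of $c_d$ is a genuinely different and shorter route that the paper does not take. It trades the two-case induction for a one-line observation (appending a trailing $0$ bit preserves bit-count parity), at the cost of first establishing the parity formula by its own short induction. Either argument is fine here; the parity version has the advantage of also making Lemma~\ref{lemma-diffpairs} immediate.
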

\begin{proof}
We use induction on $d$. For $d=2$, $c_2 = [0, 1, 1, 0]$ and the result holds.
Assume the result holds for any dimension $d-1$ or lower. 
Then for $d$ dimensions, since the first half of $c_d$ is the same as 
  $c_{d-1}$, for $i=0,\ldots 2^{d-1}-1$, we have
$$
\begin{array}{cclr}
c_d(2i) & = & \lnot c_{d}(2i-2^{d-1}) = 
\lnot c_{d-1}(2i-2^{d-1})            &\mbox{(recursive definition of $c_d$)}\\
	& = &\lnot c_{d-2}(i-2^{d-2}) = 
		\lnot c_{d}(i-2^{d-2})  &\mbox{(inductive hypothesis)}\\
	& = &\lnot c_{d-2}(i-2^{d-2}) = \lnot (\lnot c_{d-1}(i))
	= c_{d-1}(i) 	             &\mbox{(recursive definition of $c_d$)}.
\end{array}
$$

\end{proof}
% Lemma: There are no three consecutive 0's or 1's in $c_d$, for any $d>0$.
% Proof: By construction, $c_d(1) = 0, c_d(2) = 1$.
% Also, for $d>1$, the last two elements of $c_d$, $c_d(2^d-2), c_d(2^d-1)$,
% are either $[0, 1]$ or $[1, 0]$. This is obviously true for $d=2$, 
% and it is inductively true by the construction 
% $c_d = [c_{d-1}, \lnot c_{d-1}]$.
% 
% A simple inductive argument shows that $c_d$ cannot have three consecutive 
% zeros within each of its halves $c_{d-1}$, and at the interface the four
% elements $c_d(2^{d-1}-2), c_d(2^{d-1}-1), c_d(2^{d-1})$, and $c_d(2^{d-1}+1)$
% can be either $[0, 1, 0, 1]$ or $[1, 0, 0, 1]$, as we showed above.
\begin{lemma}
\label{lemma-diffpairs}
For any $d>0$, $c_d(2i) = \lnot c_d(2i+1), \ i=0,1,\ldots ,2^{d-1}-1$.
\end{lemma}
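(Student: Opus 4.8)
The plan is to argue by induction on $d$ using only the recursive definition $c_j = [c_{j-1}, \lnot c_{j-1}]$, exactly in the style of the proof of Lemma~\ref{lemma-doubledim}. The base case $d=1$ is immediate: $c_1 = [0,1]$, and the only relevant index is $i=0$, for which $c_1(0) = 0 = \lnot c_1(1)$.

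For the inductive step I would fix $d$, assume the claim for $d-1$, and fix $i$ with $0 \le i \le 2^{d-1}-1$. The key observation is that a length-two block at an \emph{even} offset cannot straddle the midpoint $2^{d-1}$ of $c_d$: the equation $2i = 2^{d-1}-1$ is impossible since the left-hand side is even. Hence either $2i+1 \le 2^{d-1}-1$ (that is, $i \le 2^{d-2}-1$) and both indices lie in the first half of $c_d$, or $2i \ge 2^{d-1}$ (that is, $i \ge 2^{d-2}$) and both lie in the second half. In the first half, $c_d(2i) = c_{d-1}(2i)$ and $c_d(2i+1) = c_{d-1}(2i+1)$ by the recursion, and the inductive hypothesis applied to the index $i$ gives $c_{d-1}(2i) = \lnot c_{d-1}(2i+1)$, which is exactly the claim. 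In the second half, set $i' = i - 2^{d-2}$ so that $2i - 2^{d-1} = 2i'$ and $2i+1 - 2^{d-1} = 2i'+1$; the recursion then gives $c_d(2i) = \lnot c_{d-1}(2i')$ and $c_d(2i+1) = \lnot c_{d-1}(2i'+1)$, and applying the inductive hypothesis to $i'$ (valid since $0 \le i' \le 2^{d-2}-1$) yields $c_{d-1}(2i') = \lnot c_{d-1}(2i'+1)$, hence again $c_d(2i) = \lnot c_d(2i+1)$.

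The only step needing care is the index bookkeeping in the case split, and I expect no genuine obstacle there precisely because the even-offset block is always confined to one half. As a sanity check and alternative one-line argument, one may instead invoke the stencil identity $color([x_1,\ldots,x_d]) = \lnot color([x_1,\ldots,x_d]\pm e_j)$ recorded above: the indices $2i$ and $2i+1$ encode two torus points that differ only in the least significant bit, i.e.\ only in the coordinate $x_1$, and so have opposite colors. I would keep the inductive proof as the primary self-contained argument and mention the stencil viewpoint as motivation.
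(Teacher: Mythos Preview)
Your proposal is correct and in fact contains both arguments the paper gives, just with the emphasis reversed: the paper leads with the stencil observation (points $2i$ and $2i+1$ are neighbors on a 1-D line of the torus, hence opposite colors) and then remarks that an inductive proof is possible because the pair cannot be split across the two halves of $c_d$, while you carry out that induction in full and mention the stencil argument as a sanity check. The key observation---that an even-offset pair $(2i,2i+1)$ never straddles the midpoint $2^{d-1}$---is exactly what the paper's one-line inductive remark relies on, so the approaches coincide.
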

\begin{proof}
Because $c_d$ are the colors of the two-point, $d$ dimensional torus, 
  every even point $2i$ is the beginning of a new 1-D line and thus has 
  a different color from its neighbor $2i+1$. 
It can also be proved inductively, since by construction $2i$ and $2i+1$ 
  cannot be split across $c_{d-1}$ and $c_d$.
\end{proof}
\begin{lemma}
\label{lemma-valsRB}
For any $d>0$ the values of $RB_d(i),\ i=0,\ldots ,2^d-1$ are given by:
$$ RB_d(i) = \left\{ \begin{array}{ll}
		\lfloor i/2 \rfloor, & \mbox{ if } c_d(i) = 0 \\
		\lfloor i/2 \rfloor + 2^{d-1},& \mbox{ if } c_d(i) = 1
	   	    \end{array} \right..$$
\end{lemma}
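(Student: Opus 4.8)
The plan is to reduce the lemma to a purely combinatorial identity about the color array $c_d$, and then settle that identity using Lemma~\ref{lemma-diffpairs}. Reading off Algorithm~\ref{alg-rb1}: when the loop reaches index $i$, the counter $reds$ equals the number of already-processed indices that were red, and $blacks$ equals $2^{d-1}$ plus the number of already-processed black indices; since indices are processed in increasing order and $RB(i,:)$ is set \emph{before} the relevant counter is incremented, this means
\begin{equation*}
RB_d(i)=
\begin{cases}
\#\{\,j:0\le j<i,\ c_d(j)=0\,\}, & \text{if } c_d(i)=0,\\[2pt]
2^{d-1}+\#\{\,j:0\le j<i,\ c_d(j)=1\,\}, & \text{if } c_d(i)=1.
\end{cases}
\end{equation*}
Hence it suffices to prove the single identity $\#\{\,j:0\le j<i,\ c_d(j)=c_d(i)\,\}=\lfloor i/2\rfloor$ for every $i$, since feeding this into both branches above produces exactly the two cases claimed in the lemma.

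For the combinatorial identity I would invoke Lemma~\ref{lemma-diffpairs}, which says the consecutive pair $\{2k,2k+1\}$ always carries the two opposite colors; equivalently, every such pair contains exactly one red and one black index. Partition $\{0,\dots,i-1\}$ into its $\lfloor i/2\rfloor$ complete consecutive pairs, together with the leftover singleton $\{i-1\}$ when $i$ is odd. The complete pairs contribute exactly $\lfloor i/2\rfloor$ indices of each color. If $i$ is odd, then $i-1=2\lfloor i/2\rfloor$ is even and $i=(i-1)+1$, so Lemma~\ref{lemma-diffpairs} gives $c_d(i-1)=\lnot c_d(i)$, and the leftover singleton contributes nothing to the count of color $c_d(i)$. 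In either parity the number of earlier indices having color $c_d(i)$ is exactly $\lfloor i/2\rfloor$, which combined with the displayed formula for $RB_d(i)$ yields the lemma. As a side remark, applying the identity at $i=2^d$ shows $c_d$ has exactly $2^{d-1}$ indices of each color, so $reds$ never leaves $[0,2^{d-1})$ and $blacks$ never leaves $[2^{d-1},2^d)$ during the loop; this confirms that $RB_d$ is a genuine permutation, although it is not needed for the statement.

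I do not expect a real obstacle here. The only points requiring care are the bookkeeping that converts the imperative counters $reds$ and $blacks$ of Algorithm~\ref{alg-rb1} into the prefix-count expression displayed above, and the short parity case-split in the counting step. One could equally run a one-step induction on $i$, using Lemma~\ref{lemma-diffpairs} in place of the explicit pair count, but the direct counting argument seems cleanest and makes the role of Lemma~\ref{lemma-diffpairs} transparent.
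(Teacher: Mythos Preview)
Your proposal is correct and follows essentially the same approach as the paper: both arguments use Lemma~\ref{lemma-diffpairs} to conclude that each consecutive pair $\{2k,2k+1\}$ contributes exactly one to each color counter, so that after $i$ steps the relevant counter sits at $\lfloor i/2\rfloor$. Your write-up is more explicit about translating the imperative counters into prefix counts and about handling the leftover singleton in the odd case, but the underlying idea is identical to the paper's terser version.
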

\begin{proof}
Because of Lemma~\ref{lemma-diffpairs}, after every pair of indices
  ($2i, 2i+1$) is considered, the number of reds or blacks increases
  only by 1.
Algorithm \ref{alg-perm0} sends all red ($c_d(i)=0$) points $i$ to the 
  first half of the permutation in the order they are considered, which 
  increases by 1 every two steps. Hence the first part of the equation.
Black colors are sent to the second half, which completes the proof.
\end{proof}

We can now show how $RB_m, m<d,$ can be obtained from $RB_d$.
\begin{theorem}
Let $RB_d$ be the permutation array that groups together the same colors 
  in a red-black ordering of the two-point, $d$ dimensional lattice, 
  as produced by Algorithm \ref{alg-rb1}. For any $0<m<d$, 
  $$RB_m(i) = \lfloor RB_d(i 2^{d-m}) / 2^{d-m} \rfloor,
	\ i=0,\ldots , 2^{m}-1.$$
\end{theorem}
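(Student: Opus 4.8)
The plan is to reduce the whole statement to the explicit value formula of Lemma~\ref{lemma-valsRB} combined with the bit-doubling identity of Lemma~\ref{lemma-doubledim}. Throughout I would treat the color bit $c_d(j)\in\{0,1\}$ as an ordinary integer, so that Lemma~\ref{lemma-valsRB} can be written uniformly as $RB_d(j)=\lfloor j/2\rfloor+c_d(j)\,2^{d-1}$ for $0\le j\le 2^d-1$, and likewise $RB_m(i)=\lfloor i/2\rfloor+c_m(i)\,2^{m-1}$. This single-line form is what makes the algebra transparent.

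First I would establish the iterated version of Lemma~\ref{lemma-doubledim}: for $0<m<d$ and $0\le i\le 2^m-1$,
\[
c_d\!\left(i\,2^{d-m}\right)=c_m(i).
\]
This follows by a short induction on $d-m$. The base case $d-m=1$ is exactly Lemma~\ref{lemma-doubledim} applied to the even index $2\,(i\,2^{d-m-1})$; for the inductive step one peels off one factor of $2$ via Lemma~\ref{lemma-doubledim} and applies the hypothesis in dimension $d-1$. One only has to check the indices stay in range, which is immediate since $i\,2^{d-m}\le(2^m-1)2^{d-m}<2^d$, and the same bound holds at every intermediate stage.

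Next I would substitute $j=i\,2^{d-m}$ into Lemma~\ref{lemma-valsRB}. Because $d-m\ge 1$, we have $\lfloor j/2\rfloor=i\,2^{d-m-1}$ exactly, and by the previous step $c_d(j)=c_m(i)$, so $RB_d(i\,2^{d-m})=i\,2^{d-m-1}+c_m(i)\,2^{d-1}$. Dividing by $2^{d-m}$ gives $i/2+c_m(i)\,2^{m-1}$, and since $m\ge 1$ the term $c_m(i)\,2^{m-1}$ is a nonnegative integer, so taking the floor yields $\lfloor i/2\rfloor+c_m(i)\,2^{m-1}$, which is precisely $RB_m(i)$ by Lemma~\ref{lemma-valsRB}. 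This closes the argument.

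The only genuine content is the iterated color identity, and even that is routine induction; everything else is floor bookkeeping, where the key point to get right is that the multiplications by powers of two commute with the two floor operations cleanly (which is why the integer-valued reformulation of Lemma~\ref{lemma-valsRB} is worth setting up first). So I expect the main "obstacle" to be purely organizational: phrasing the indices and the range conditions so that the iterated application of Lemma~\ref{lemma-doubledim} is unambiguous.
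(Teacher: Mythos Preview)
Your proof is correct and relies on the same two lemmas as the paper (Lemma~\ref{lemma-doubledim} and Lemma~\ref{lemma-valsRB}). The paper first proves the step $m=d-1$ by splitting into the two cases $c_d(2i)=0$ and $c_d(2i)=1$, and then invokes induction on $d-m$ to reach general $m$. You instead push the induction onto the color identity $c_d(i\,2^{d-m})=c_m(i)$ and then handle all $m$ in a single direct computation, using the unified integer form $RB_d(j)=\lfloor j/2\rfloor+c_d(j)\,2^{d-1}$ to bypass the red/black case split. The two arguments are essentially equivalent; your organization is a bit more streamlined and makes the floor bookkeeping in the division step more transparent.
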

\begin{proof}
We show first for $m=d-1$.
Because of Lemma~\ref{lemma-doubledim}, we consider the even points
  in $RB_d$. 
Assume first $c_d(2i) = c_{d-1}(i) = 0$. From Lemma~\ref{lemma-valsRB} we have,
$RB_d(2i) = \lfloor 2i/2 \rfloor = i$.
Then, $RB_{d-1}(i) = \lfloor i/2 \rfloor = \lfloor RB_d(2i)/2 \rfloor$.
Now assume $c_d(2i) = c_{d-1}(i) = 1$. From Lemma~\ref{lemma-valsRB} we have,
$RB_d(2i) = 2^{d-1} + \lfloor 2i/2 \rfloor = 2^{d-1} + i$, 
and therefore
$RB_{d-1}(i) = 2^{d-2}+\lfloor i/2 \rfloor = 2^{d-2} + 
	\lfloor (RB_d(2i)-2^{d-1})/2 \rfloor  = 
	\lfloor RB_d(2i)/2 \rfloor$, which
proves the formula for both colors.
A simple inductive argument proves the result for any $m=1,\ldots,d-2$.
\end{proof}

The theorem says that given $RB_d$ in bit format, $RB_m$ is obtained as 
  the left (most significant) $m$ bits of every $2^{d-m}$ number in $RB_d$.
We now have all the pieces needed to modify Algorithm \ref{alg-perm0}
  to produce the permutation of the hierarchical coloring of $d$ dimensional 
  lattice torus of size $N=\prod 2^{k_j}, \forall j=1,\ldots , d$.

\begin{algorithm}
\caption{\label{alg-perm}
Hierarchical permutation of the lattice -- case $2^{k_i} \neq 2^{k_j}$}
\begin{tabbing}xxx\=xxx\=xxx\=xxx\=\kill
\% {\bf Input:}\\
\% the coordinates of a point $x = (x_1, x_2, \ldots , x_d)$\\
\% the global $RB$ array produced by Algorithm \ref{alg-rb1}\\
\% {\bf Output:}\\
\% The location in which $x$ is found in the hierarchical permutation \\
{\bf function} loc = {\sc LatticeHierPermutation}($(x_1, x_2, \ldots , x_d)$)\\
\>  \% Make a $d\times \max (k_j)$ table of all the coordinate bits\\
\>  \% Dimensions with smaller sizes only have up to $k_j$ bits set\\
\>  {\bf for} $j=1:d$\\
\>\>    $(b^j_{k_j}, \ldots , b^j_2, b^j_1) = \mbox{dec2bin}(x_j)$\\
\>  \% Accumulate bit-reversed order. Start from LSB\\
\>  $loc =[\ ]$ \\
\>  {\bf for} $m=1:\max(k_j)$\\
\>\>   \% A vertical section of bits. Take the m-th bit of all coordinates\\
\>\>   \% in dimensions that can still be subdivided ($m\leq k_j$).\\
\>\>   \% Record number of such dimensions\\
\>\>   $activeDims = 0$\\
\>\>   $bits = [\ ]$\\
\>\>   {\bf for} $j=1:d$\\
\>\>\>    {\bf if} ($m \leq k_j$) \\
\>\>\>\>    $bits = [bits, ~~ b^j_m]$\\
\>\>\>\>    $activeDims = activeDims+1$\\
\>\>   \% permute it to the corresponding red-black order using $RB_m$\\
\>\>   $index = \mbox{bin2dec}(bits) 2^{d-activeDims}$\\
\>\>   $(s^1,\ldots ,s^{activeDims}) = \lfloor RB(index)/2^{d-activeDims}\rfloor$ \\
\>\>   \% Append this string to create the reverse order string\\
\>\>	$loc = [loc,~~(s^1,\ldots ,s^{activeDims}) ]$\\
\>  {\bf return} bin2dec($loc$)
\end{tabbing}
\end{algorithm}

For $d>1$, Algorithm \ref{alg-perm} is not equivalent to Algorithm~\ref{HPtask1}
  \colorred{
   because it pre-splits color subgroups into conformal lattices.
In general, the difference in the number of colors is small.
At level $m=0,1,\ldots$, Algorithm \ref{alg-perm} performs (at least) a 
  distance-$2^m$ coloring and produces $2^{dm+1}$ colors.
}

  \colorred{
For classic probing, the minimum number of colors required for
  distance-$2^m$ coloring of lattices is not known for $d>2$ 
  \cite{Blaum98interleavingschemes}.
An obvious lower bound is the number of lattice sites in the 
  ``unit sphere'' of graph diameter $2^{m}$.
If $\left( \begin{array}{c} d \\ i \end{array} \right)$ denotes the 
  binomial coefficient, with a 0 value if $d<i$, 
  the lower bound is given by
  \cite[\mbox{Theorem 2.7}]{ComputingContinuousDiscretely}:
$$\sum_{i=0}^d 
	\left( \begin{array}{c} d \\ i \end{array} \right) 
	\left( \begin{array}{c} d-i+2^{m-1} \\ d \end{array} \right).
$$
For sufficiently large distances, this is $O( 2^{3m-1}/3)$ for $d=3$,
  and $O(2^{4m-3}/3)$ for $d=4$.
Thus, we can bound asymptotically how many more colors our method gives:
$$\frac{\mbox{Number of colors in hierarchical probing}}
       {\mbox{Number of colors in classic probing}}  \left\{ 
	\begin{tabular}{ll}
	$<$ 12, & \mbox{if} d=3 \\
	$<$ 48, & \mbox{if} d=4.
	\end{tabular}
	\right.
$$
}
  \colorred{
In practice, we have observed ratios of 2--3.
On the other hand, because hierarchical probing uses more vectors,
  the variance reduction it achieves 
  when a certain distance coloring completes, i.e., after 
  $2^{dm+1}$ quadratures in the MC estimator, is typically better 
  than classic probing for the same distance.
}

In terms of computational cost, the algorithm is not only tractable
  (compared to classic probing), but it is very efficient.
As an example, producing the hierarchical permutation of a $64^4$ lattice 
  takes about 6 seconds on a Macbook Pro with 2.8 GHz Intel Core 2 Duo.
More importantly, the permutation of each coordinate is obtained 
  independently which facilitates parallel computing.

\cred
\subsection{Coloring lattices with non-power of two sizes}
\label{sec:nopower2}
Consider a lattice of size $N = \prod_{i=1}^d n_i$.
Sometimes, LQCD may generate lattices where one or more $n_i$ are not
  powers of two.
In this case, it is typical that $n_i = 2^mp$, where $p\neq 2$ is a 
  small prime number.
Our hierarchical coloring method works up to $m$ levels,
  but then the remaining subgrids are of odd size in the $i$-th dimension, 
  causing coloring conflicts because of wrap-around.
We show that such a lattice is three-colorable.

\begin{theorem}
A toroidal, uniform lattice of size $N = \prod_{i=1}^d n_i$, where
  one or more $n_i$ are odd, admits a three-coloring with 
  point $x = (x_1, \ldots , x_d)$ receiving color:
\[
  C(x)=\left(\sum_{i=1}^{d} x_i + \sum_{i=1}^{d} \delta(x_i)\right)\mbox{mod} \;3,
\ \mbox{where} \
\delta(x_i) = \left\{ \begin{array}{ll}
	1, & \mbox{if } (x_i = n_i-1) \mbox{ and}\\
	   & \ (n_i-1\;\mbox{mod}\;3 = 0)\\
	0, & \mbox{everywhere else}.
	\end{array}
\right.
\]
\end{theorem}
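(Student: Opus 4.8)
The plan is to verify that $C(x)$ is a proper three-coloring, i.e., that any two lattice neighbors receive different colors. Two points $x$ and $y$ are neighbors on the torus iff they differ in exactly one coordinate, say the $i$-th, with $y_i = x_i \pm 1 \pmod{n_i}$. There are two cases: the ``ordinary'' step, where no wrap-around occurs and $\{x_i, y_i\}$ are two genuine consecutive integers in $\{0, \dots, n_i-1\}$; and the ``wrap-around'' step, where $\{x_i, y_i\} = \{0, n_i-1\}$. I would treat these separately, since the correction term $\delta$ is designed precisely to repair the wrap-around case.

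For the ordinary step, suppose $y_i = x_i + 1$ with $x_i + 1 \le n_i - 1$. All other coordinates agree, so $C(y) - C(x) \equiv (y_i - x_i) + (\delta(y_i) - \delta(x_i)) \equiv 1 + (\delta(y_i) - \delta(x_i)) \pmod 3$. The key observation is that $\delta(x_j)$ can only be nonzero when $x_j = n_j - 1$; since $x_i < n_i - 1$ we have $\delta(x_i) = 0$, and $\delta(y_i)$ is nonzero only if $y_i = x_i + 1 = n_i - 1$ \emph{and} $n_i - 1 \equiv 0 \pmod 3$. So either $\delta(y_i) - \delta(x_i) = 0$, giving $C(y) - C(x) \equiv 1 \not\equiv 0$, or $\delta(y_i) - \delta(x_i) = 1$, which happens exactly when $n_i - 1 \equiv 0 \pmod 3$, i.e., $n_i \equiv 1 \pmod 3$. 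But in that case $n_i$ is also odd or even; what matters is only that $C(y) - C(x) \equiv 2 \not\equiv 0 \pmod 3$. Either way the colors differ. (When $n_i$ is even and a power of two times something, $n_i - 1$ is never $\equiv 0 \pmod 3$ unless the odd factor forces it; the point is the algebra above is insensitive to this — only the residue of $n_i-1$ mod $3$ enters.)

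For the wrap-around step, $\{x_i, y_i\} = \{0, n_i - 1\}$, which occurs only when $n_i \ge 2$ and we are crossing the seam; this is a real edge of the torus precisely when $n_i$ is such that $0$ and $n_i - 1$ are adjacent, i.e., always on a torus. Here $C(y) - C(x) \equiv (n_i - 1 - 0) + (\delta(n_i - 1) - \delta(0)) \pmod 3$. Now $\delta(0) = 0$ (since $0 = n_i - 1$ would force $n_i = 1$, excluded), and $\delta(n_i - 1) = 1$ iff $n_i - 1 \equiv 0 \pmod 3$, else $0$. If $n_i - 1 \equiv 1 \pmod 3$ then $C(y) - C(x) \equiv 1 + 0 \equiv 1 \ne 0$; if $n_i - 1 \equiv 2$ then $\equiv 2 + 0 \equiv 2 \ne 0$; and if $n_i - 1 \equiv 0$ then $\equiv 0 + 1 \equiv 1 \ne 0$. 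In all three residue classes the endpoints of the wrap-around edge get different colors, which is exactly what the $\delta$ correction buys us. Combining the two cases shows every torus edge is properly colored, so $C$ is a valid three-coloring.

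The step I expect to be the main obstacle is bookkeeping the wrap-around case correctly — in particular making sure $\delta$ is added to exactly one endpoint of the seam edge (never both, never neither) and checking that this holds simultaneously across \emph{all} the odd dimensions when the lattice wraps in more than one direction. Since a neighbor step changes only one coordinate, the multi-dimensional case reduces cleanly to the one-dimensional analysis above: the sums over the other coordinates cancel in $C(y) - C(x)$, so there is no genuine interaction between dimensions. The remaining care is purely in the modular arithmetic of the three residue classes of $n_i - 1 \bmod 3$, which the case split above handles exhaustively.
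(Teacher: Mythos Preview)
Your proof is correct and follows essentially the same approach as the paper: reduce to a single changing coordinate, compute $C(y)-C(x)\equiv (y_i-x_i)+(\delta(y_i)-\delta(x_i))\pmod 3$, and split into the non-wrap and wrap-around cases, checking the three residue classes of $n_i-1\bmod 3$ in the latter. The paper organizes the cases as ``neither point on the boundary'' versus ``one point on the boundary'' (then subdividing by whether $\delta$ fires), while you split by ``ordinary step'' versus ``wrap-around''; these are the same case analysis with slightly different bookkeeping, and both are complete.
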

\begin{proof}
We show that $C(x) \neq C(x')$ for any two points, $x$, $x'$ 
  with $||x-x'||_1 = 1$.
These two points differ by one coordinate, $j$, 
since otherwise they are no longer unit length apart. So,
$C(x)-C(x') =
 \left(\sum_{i=1}^{N} (x_i - x'_i)
  + \sum_{i=1}^{N} (\delta(x_i) - \delta(x'_i))\right) mod \;3 
  =  \left( x_j-x'_j+\delta(x_j)-\delta(x'_j)\right) \; mod \; 3$.
We consider the following cases.

If neither $x$ and $x'$ lie on the boundary of the $j$-th dimension, 
  $x_j \neq n_j -1$, then $\delta(x_j)=\delta(x'_j)=0$, and 
  $C(x)-C(x') = (x_j-x'_j) \; mod \; 3 = \pm 1 \; mod \; 3 \neq 0$. 

Since $x_j,x'_j$  both vary along the j-th dimension, only one of 
  these points can lie on the boundary point of that dimension, 
  consequently, only one of the two deltas can be equal to one.  
%Then, we have $C(x)-C(x')=(x_j-x'_j+\delta(x_j))\;  mod \; 3$ or 
%  $C(x)-C(x')=(x_j-x'_j-\delta(x'_j)) \; mod \; 3$. 
Without loss of generality, we assume that $x_j$ is on the 
  boundary of the $j$-th dimension, 
  so $C(x)-C(x')=(x_j-x'_j+\delta(x_j))\; mod \; 3$.
In this case $x_j-x'_j=1$, or in the warp around case, 
  where $x'_j=0,x_j-x'_j=n_j-1$. 
There are two subcases:
\begin{enumerate}
\item $\delta(x_j)=0$, then $x_j=n_j-1$ with $n_j-1 \;mod\;3\neq 0$, 
  so $C(x)-C(x')=1$, or $C(x)-C(x')=(n_j-1 \; mod \; 3)$ and thus is non-zero.

\item $\delta(x_j)=1$, then $x_j = n_j - 1$ and $n_j-1\; mod\; 3 = 0$, 
  so $C(x)-C(x')$ is equal to 
  $(1 + \delta(x_j))\; mod \;3=(1+1)\; mod \;3 \neq 0$,
  or $C(x)-C(x')$ is equal to 
  $(n_j -1 +\delta(x_j)) = (0 + \delta(x_j))\; mod \; 3 =  1 \; mod \; 3 \neq 0$.
\end{enumerate}
\end{proof}

%\begin{algorithm}
%\caption{\label{alg-perm-arbitrary} Hierarchical coloring of the lattice 
%						-- case $n_i \neq 2^{k_i}$}
%\begin{tabbing}xxx\=xxx\=xxx\=xxx\=xxx\=xxx\=\kill
%{\bf function} xcolor = {\sc LatticeHierPerm-AnySize} 
%					($x=(x_1, x_2, \ldots , x_d)$)\\
%\> color $ = 0$, offset $ = 0$\\
%\> {\bf for} $i = 1 : d$ \\
%\>\>  color $ =$  color + $x_i$\\
%\>\>  {\bf if} $\left(x_i=n_i-1 \mbox{ AND } x_i \mbox{ mod } 3 = 0\right)$
%offset =  offset + 1\\
%\> xcolor $=$  (color + offset) mod $3$
%\end{tabbing}
%\end{algorithm}

\colorred{
After the method produces the three-coloring, no further hierarchical 
colorings can be produced.
This is because the three-coloring yields blocks of nodes that are not conformal lattices, and are of irregular sizes and shapes. This prevents the use of a method similar to the one described in section 3. Because of this, for lattices which have dimensions with factors other than two, we can proceed only one level further after the factors of two have been exhausted by the hierarchical coloring algorithm. 
This is not a shortcoming in LQCD, since, by construction,
  lattices have dimensions with only one odd factor.
Finally, note that the number of hierarchical probing vectors produced 
  before exhausting the powers of two in each dimension is typically very 
  large already, obviating the use of the last, three-coloring step to 
  produce even more vectors.
}

\cblack

\section{Removing the deterministic bias}
The probing vectors produced in Section \ref{sec:ProbingSequence}
  are deterministic and, even though they give better approximations than 
  random vectors, they introduce a bias.
To avoid this, we can view formula (\ref{Hperm-recursive}) not as a 
  sequence of vectors but {\em as a process of generating an orthogonal 
  basis starting from any vector and following a particular pattern}. 
Therefore, consider a random vector $z_0 \in \mathbb{Z}_2^N$, 
  and $[z_1, z_2, \ldots ,z_m]$ the sequence of vectors produced 
  by (\ref{Hperm-recursive}).
If $\odot$ is the element-wise Hadamard product, 
  the vectors built as 
\begin{equation}
  V = [ z_0\odot z_1, z_0\odot z_2, \ldots, z_0\odot z_m]
  \label{eq:removebias}
\end{equation}
  have the same properties as $Z$, i.e.,  
  $V^TV = Z^TZ$ and $VV^T$ has same non-zero pattern as $ZZ^T$ 
  ($VV^T = (z_0z_0^T)\odot ZZ^T$), but it does not have the bias.

\section{Numerical experiments}
We present a set of numerical examples on control test problems 
  and on a large QCD calculation,
  in order to show the effectiveness of
  hierarchical probing over classic probing, and over standard
  noise Monte Carlo estimators for $\TrAi$.
We also study the effect of removing the bias on convergence.

Our standard control problem is the discretization of the 
  Laplacian on a uniform lattice with periodic boundary conditions.
We control the dimensions (3-D or 4-D), the size per dimension, 
  and the conditioning (and thus the decay of the elements of the inverse)
  by applying a shift to the matrix.
Most importantly, for these matrices we know the trace of the inverse 
  analytically.
We will refer to such problems as Laplacian, with their size implying 
  their dimensionality.

\subsection{Comparison with classic probing}
For this set of experiments we consider a $64^3$ Laplacian, shifted 
  so that its condition number is $O(100)$.
Therefore, its $A^{-1}$ exhibits dominant features on and close to 
  (in a graph theoretical sense) the 
  non-zero structure of $A$, with decay away from it.
The decay rate depends on the conditioning of $A$.
Our methods should be able to pick this structure effectively.

Figure \ref{fig:ClassicProbingComparisons} shows the performance of 
  classic probing, which is a natural benchmark for our methods.
The left graph shows that for larger distance colorings, 
  probing performs extremely well.
For example, with 317 probing vectors, which correspond
  to a 8-distance coloring, we achieve more than two orders reduction in 
  the error. 
Of course, if the approximation is not good enough, this work must be 
  discarded, and the algorithm must be repeated for higher distances.
Hadamard vectors, used in their natural order, do not capture well 
  the nonzero structure of this $A$.

The right graph in Figure \ref{fig:ClassicProbingComparisons}
  shows one way to improve accuracy beyond a certain probing distance.
After using $[0,\ldots, 0, \mathbf{1}_{c(k)}^T, 0, \ldots 0]^T$ as the 
  probing vector for color $k$,
  we continue building a Hadamard matrix in its natural order only for the 
  $c(k)$ coordinates of that color.
If probing has captured the most important parts of the matrix, 
  the remaining parts could be sufficiently approximated by natural 
  order Hadamard vectors.
This is confirmed by the results in the graph, if one knows what initial
  probing distance to pick.
On the other hand, hierarchical probing, which considers all possible 
  levels, achieves better performance than all other combinations.

\begin{figure}[ht]
\includegraphics[width=0.5\textwidth,angle=0]{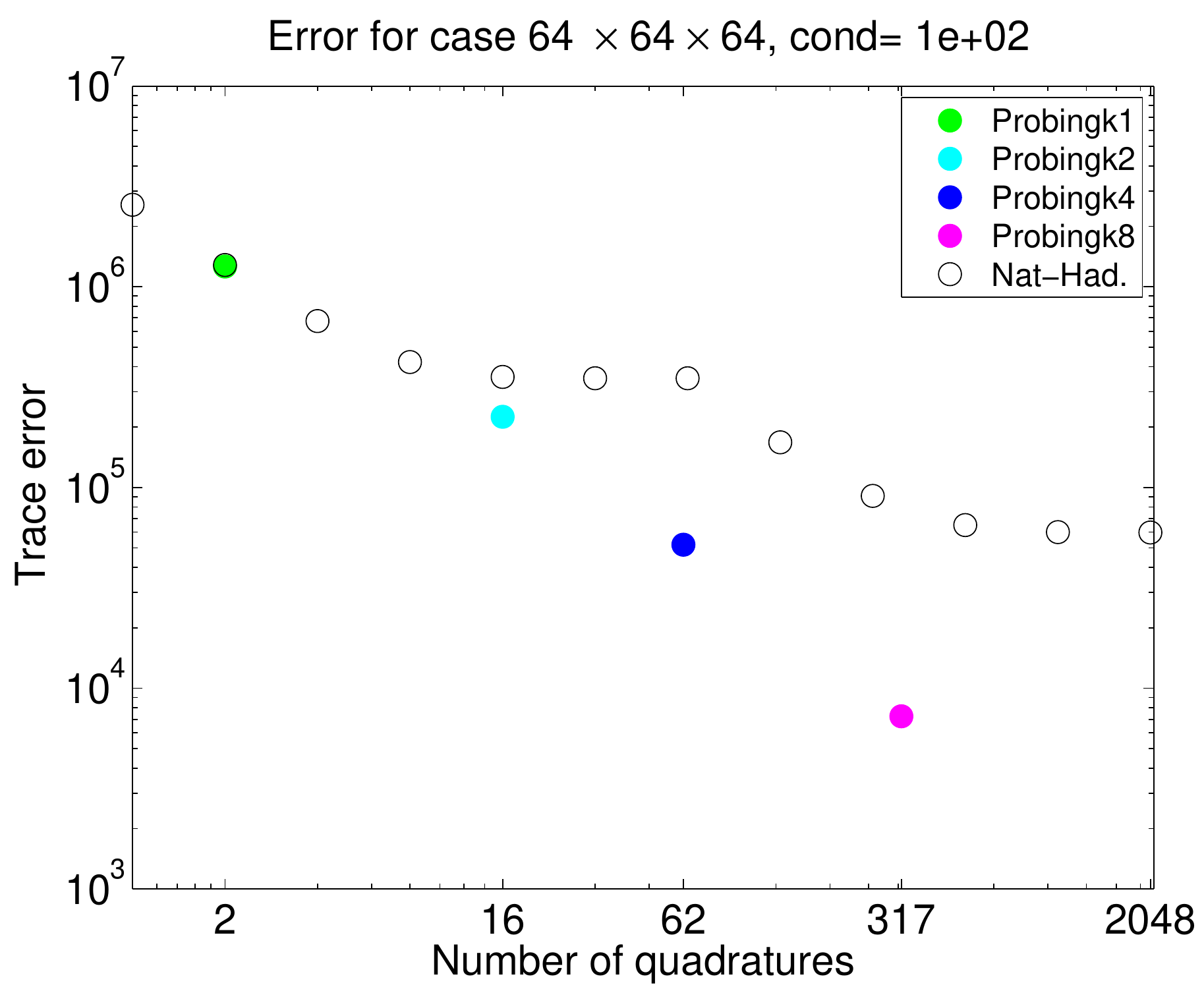}
\includegraphics[width=0.5\textwidth,angle=0]{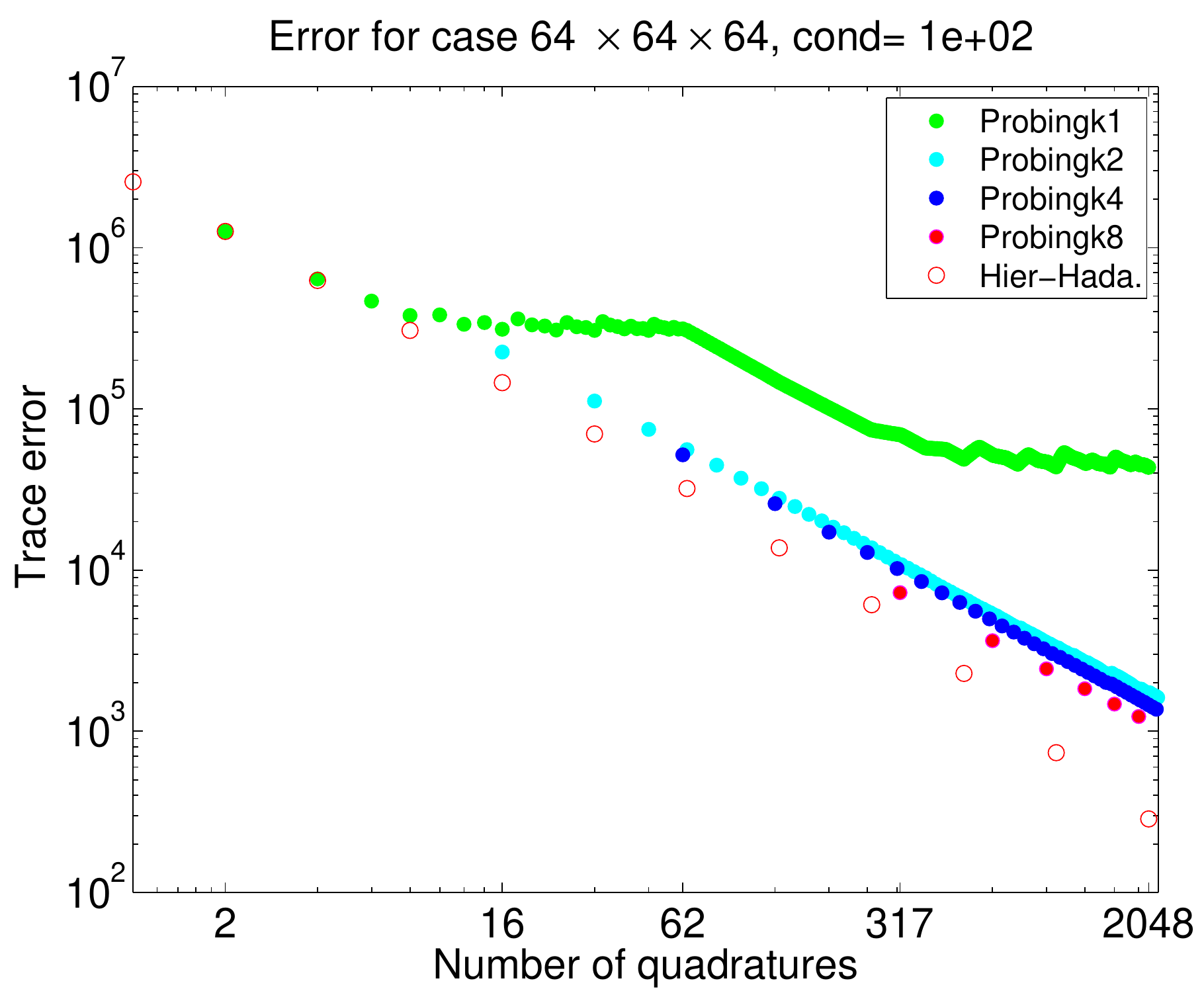}
\caption{ \label{fig:ClassicProbingComparisons}
Error in the $\TrAi$ approximation using the MC method with various
  deterministic vectors.
Classic probing requires 2,16,62, and 317 colors for probing 
  distances 1,2,4, and 8, respectively. 
Left: Classic probing approximates the trace better than the same 
  number of Hadamard vectors taken in their natural order.
  Going to higher distance-$k$ requires discarding previous work. 
Right: Perform distance-$k$ probing, then apply Hadamard in 
  natural order within each color. 
  Performs well, but hierarchical performs even better.
        }
\end{figure}
\begin{figure}[hbt]
\includegraphics[width=0.49\textwidth,angle=0]{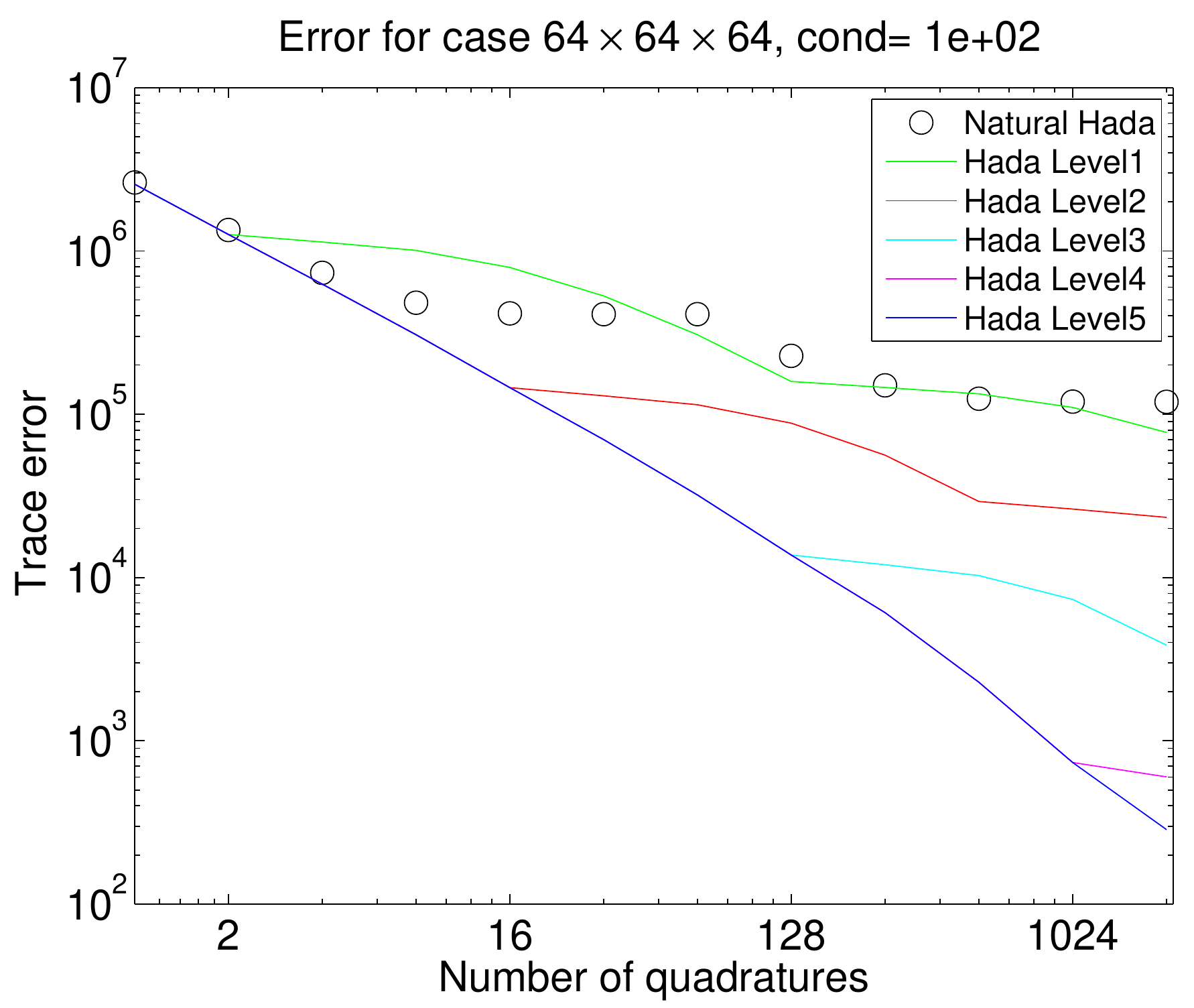}
\includegraphics[width=0.51\textwidth,angle=0]{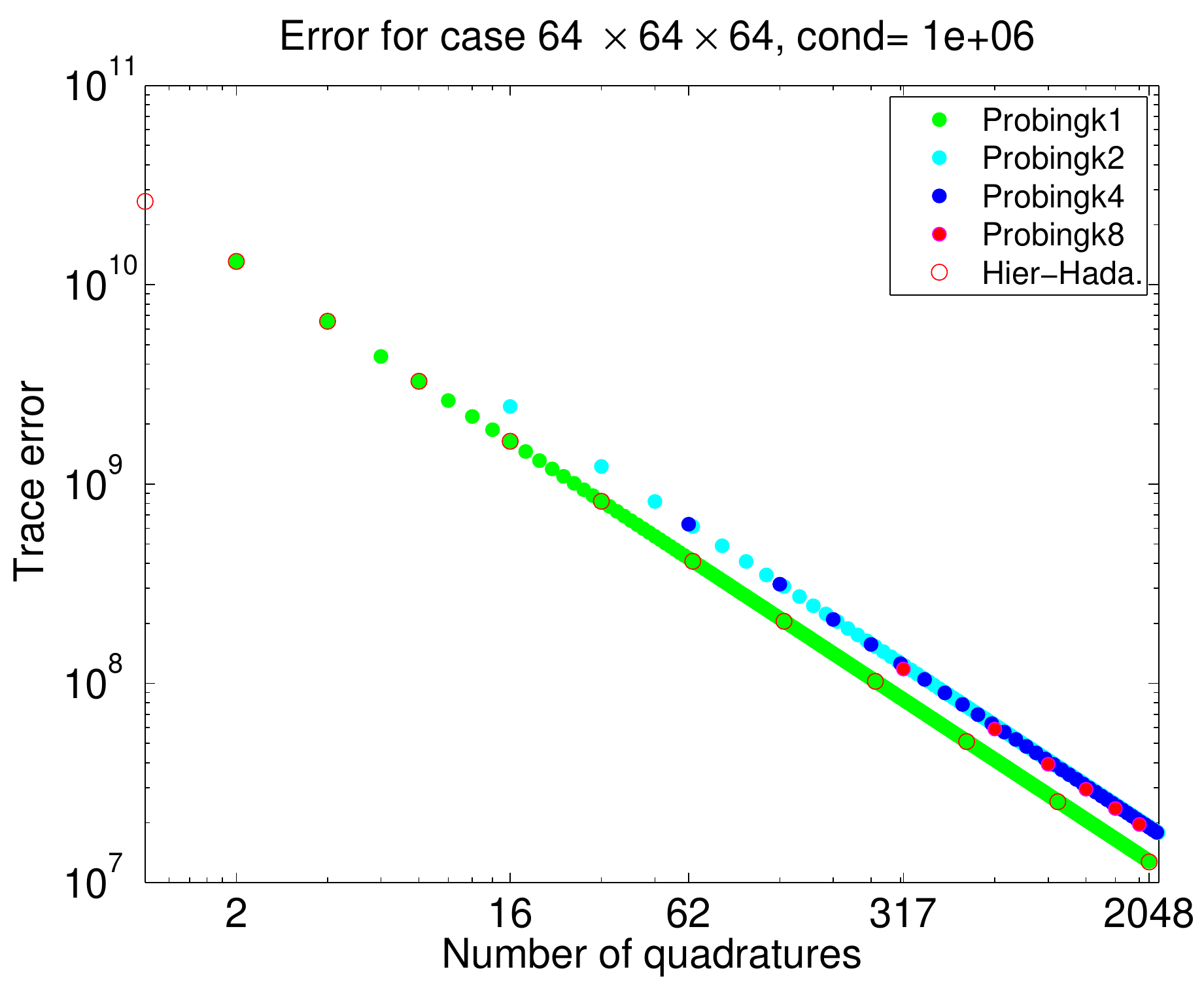}
\caption{ \label{fig:VariousLevels-highCond}
Left: The hierarchical coloring algorithm is stopped after $1,2,3,4,5$ levels
  corresponding to distances $2,4,8,16,32$. The ticks on the x-axis show
  the number of colors for each distance. 
Trace estimation is effective up to the stopped level; beyond that the vectors
  do not capture the remaining areas of large elements in $A^{-1}$.
Compare the results with classic probing in 
  Figure~\ref{fig:ClassicProbingComparisons}, which requires only a few 
  less colors for the same distance.
Right: When the matrix is shifted to have high condition number, the lack
  of structure in $A^{-1}$ causes all methods to produce similar results.
        }
\end{figure}

In Figure \ref{fig:VariousLevels-highCond}, left graph, 
  we stop our recursive algorithm at various levels and use the 
  resulting permutation to generate the vectors for the trace computation.
It is clearly beneficial to allow the recursion to run for all levels.
We also point out that stopping at intermediate levels behaves similarly to
  classic probing with the corresponding distance.
On the right graph of Figure \ref{fig:VariousLevels-highCond},
  we observe no difference between methods for high conditioned matrices.
The reason is that the eigenvector of the smallest eigenvalue of $A$ 
  is the vector of all ones, $\mathbf{1}_N$.
The more ill conditioned $A$ is, the more $A^{-1}$ is dominated by 
  $\mathbf{1}_N\mathbf{1}_N^T$, which has absolutely no variation or pattern.

We point out that the experiments in this subsection did
  not use the bias removing technique that takes a Hadamard product of 
  all vectors in the sequence with the same random $\mathbb{Z}_2$ vector.
This has a severe effect for the Laplacian matrix because the 
  first vector of our Hadamard sequences is $h_0 = \mathbf{1}_N$, 
  the lowest eigenvector.
Even for a well conditioned Laplacian, starting with $h_0$ guarantees 
  that the first trace estimate will have no contribution from other 
  eigenvectors, and thus will have a large error.
 From a statistical point of view, $h_0$ is the worst starting vector 
  for Laplacians, but it better exposes the rate at which error reduces 
  by various methods.

\subsection{Comparison with random-noise Monte Carlo}
Having established that hierarchical probing discovers matrix structure 
  as well as classic probing, we turn to gauge its improvements over
  the standard $\mathbb{Z}_2$ noise MC estimator.
First, we show three sets of graphs for increasing condition numbers of  
  the Laplacian. 
We use the $64^3$, $32^4$, and $64\times 128^2$ lattices, and plot
  the convergence of the trace estimates for hierarchical probing, natural 
  order Hadamard, and for the standard $\mathbb{Z}_2$ random estimator.
Both Hadamard sequences employ the bias removing 
  technique (\ref{eq:removebias}).
As it is typical, the random estimator includes error bars designating
  the two standard deviation confidence
  intervals, $\pm2(\overline{Var}/s)^{1/2}$, 
  where $\overline{Var}$ is the variance estimator.

Figure \ref{fig:vsZ2smallCond} shows the convergence history of the three
  estimators for well conditioned shifted Laplacians, which therefore 
  have prominent structure in $A^{-1}$.
Hierarchical probing exploits this structure, and thus performs much 
  better than the other methods.
Note that the problem on the left graph is identical to the one used
  in the previous section. 
The far better performance of the Hadamard sequences in this case 
  is due to avoiding the eigenvector $h_0$ as the starting vector.
Once again, Hadamard vectors in natural order should only be used
  for special banded matrices.

Figures \ref{fig:vsZ2midCond} and \ref{fig:vsZ2largeCond} show 
  results as the condition number of the problems increase. 
As expected, the advantage of hierarchical probing wanes as the 
  structure of $A^{-1}$ disappears, but there is still no reason
  not to use it as the method still provides improvement, albeit diminishing.
We have included 4-D lattices in our experiments, first because 
  of their use in LQCD, and second because they are more difficult 
  to exploit their structure than lower dimensionality lattices.
For 1-D or 2-D lattices which we do not show, hierarchical probing was
  significantly more efficient.

\begin{figure}[ht]
\includegraphics[width=0.5\textwidth,angle=0]{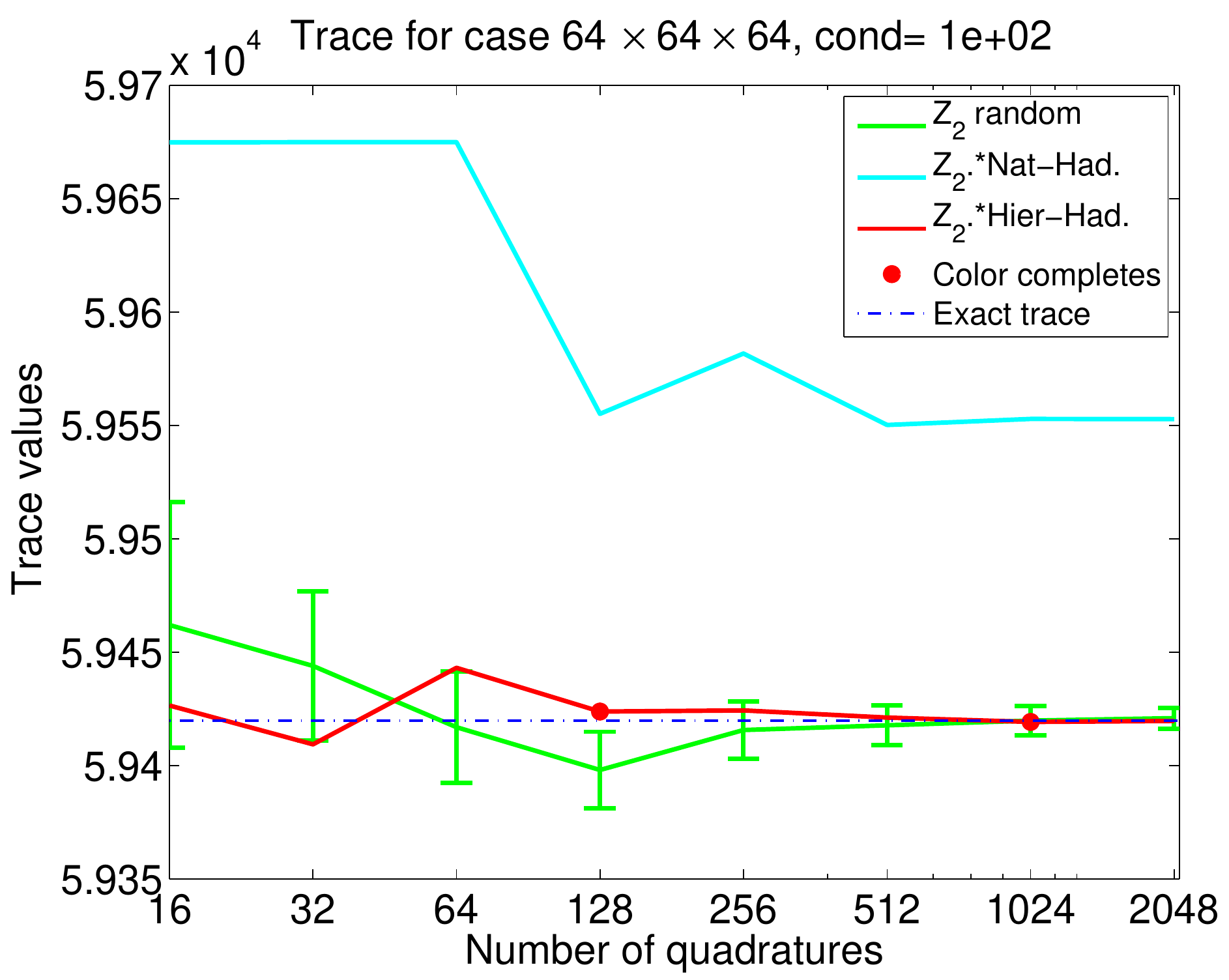}
\includegraphics[width=0.5\textwidth,angle=0]{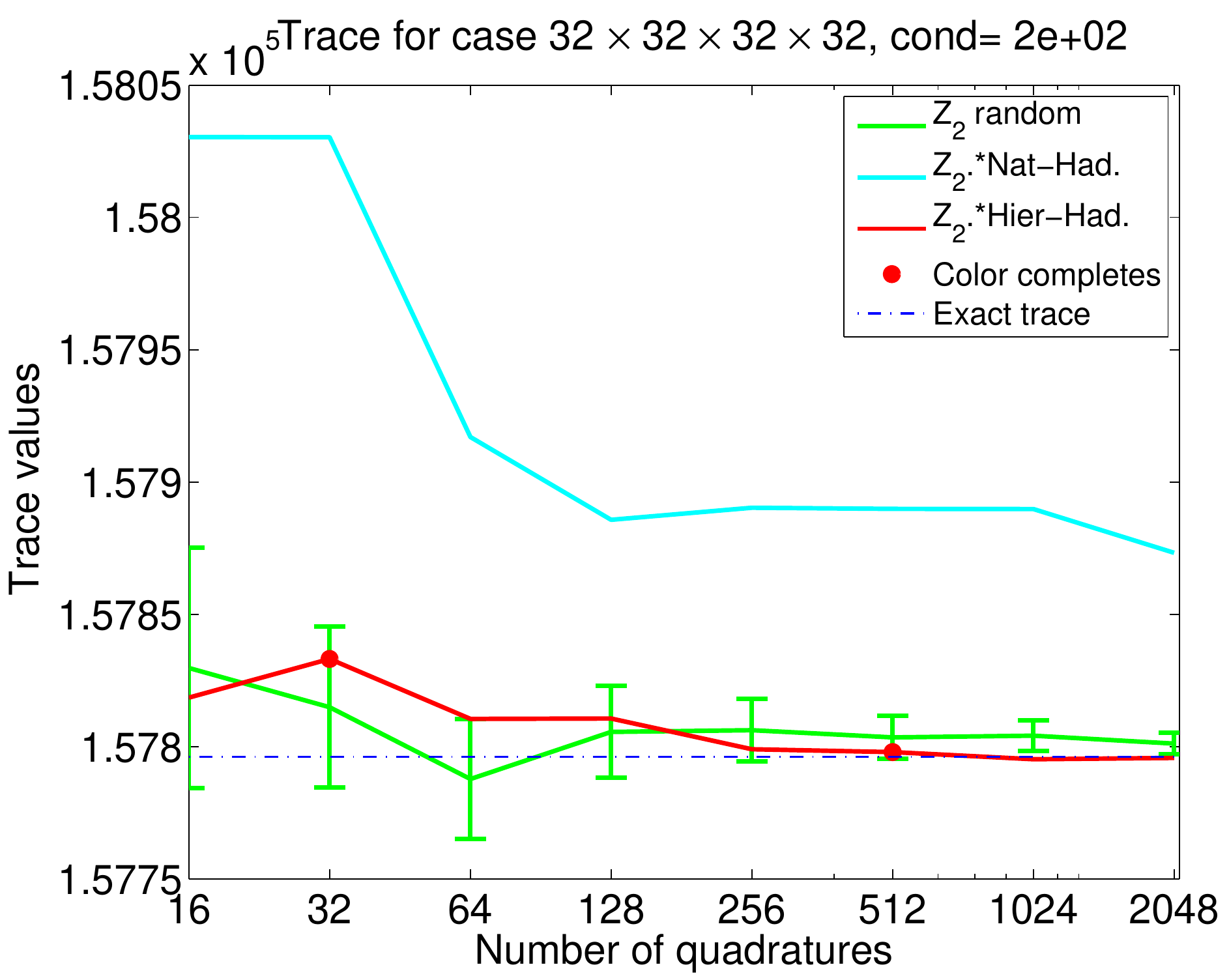}
\caption{ \label{fig:vsZ2smallCond}
Convergence history of $\mathbb{Z}_2$ random estimator, 
  Hadamard vectors in natural order, and hierarchical probing, the latter two 
  with bias removed as in (\ref{eq:removebias}).
Because of small condition number, $A^{-1}$ has a lot of structure, 
  making hierarchical probing clearly superior to the standard estimator. 
As expected, Hadamard vectors in natural order are not competitive.
The markers on the plot of the hierarchical probing method designate
  the number of vectors required for a particular distance coloring 
  to complete. 
It is on these markers that structure is captured and error minimized.
        }
\end{figure}

\begin{figure}[ht]
\includegraphics[width=0.5\textwidth,angle=0]{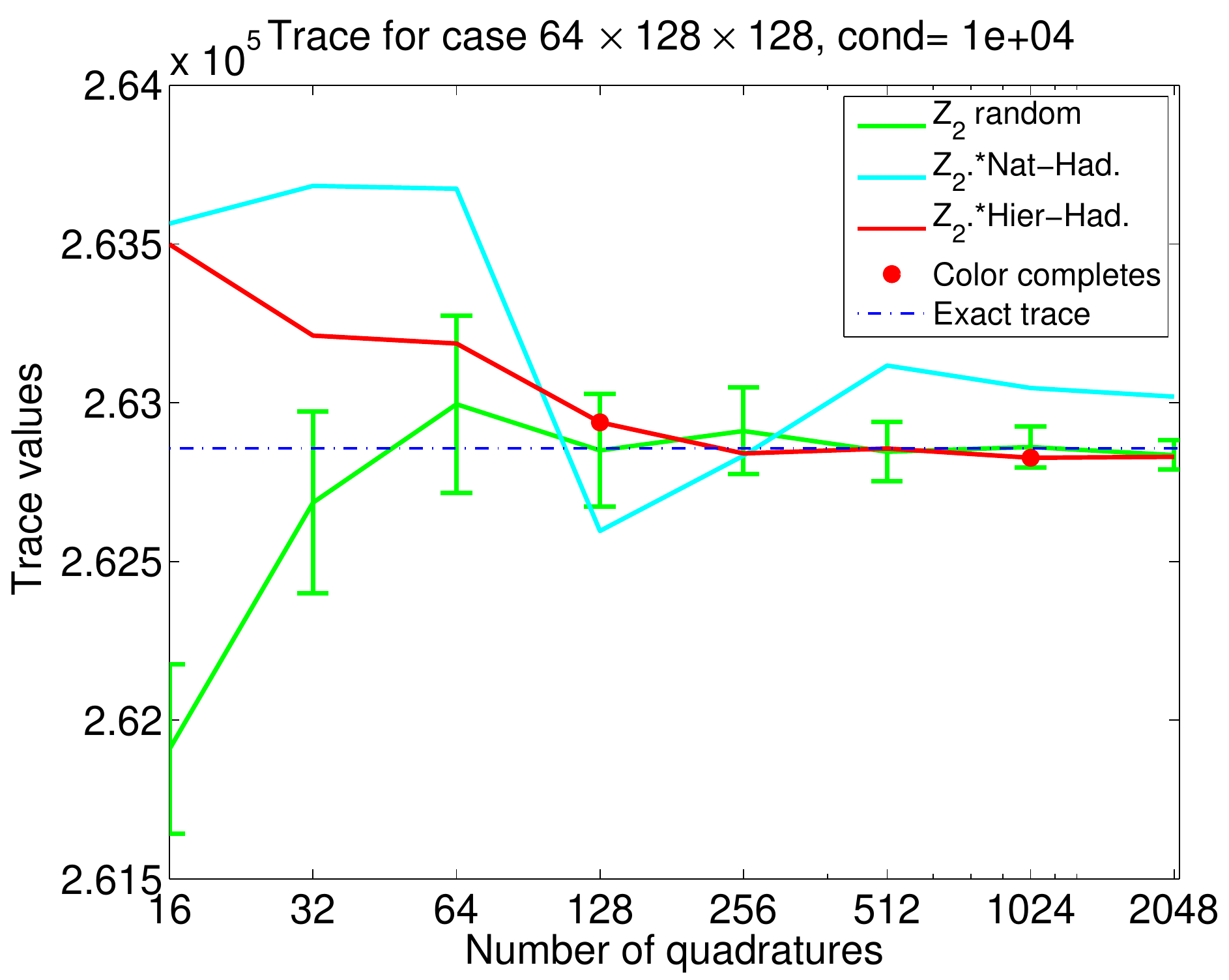}
\includegraphics[width=0.5\textwidth,angle=0]{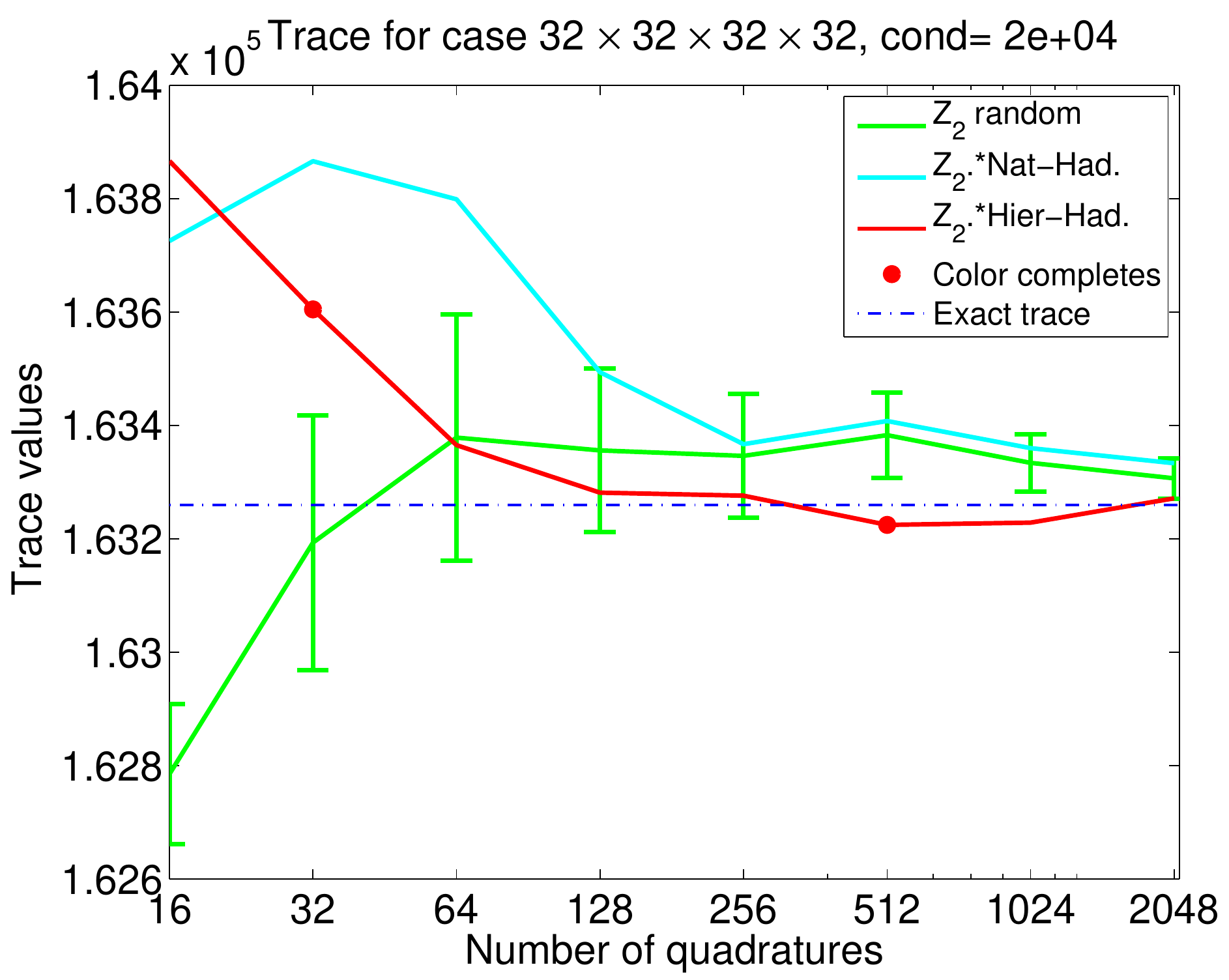}
\caption{ \label{fig:vsZ2midCond}
Convergence history of the three estimators as in Figure \ref{fig:vsZ2smallCond}
  for a larger condition number $O(10^4)$.
As the structure of $A^{-1}$ becomes less prominent, the differences between 
  methods reduce.
Still, hierarchical probing has a clear advantage.
        }
\end{figure}

\begin{figure}[ht]
\includegraphics[width=0.5\textwidth,angle=0]{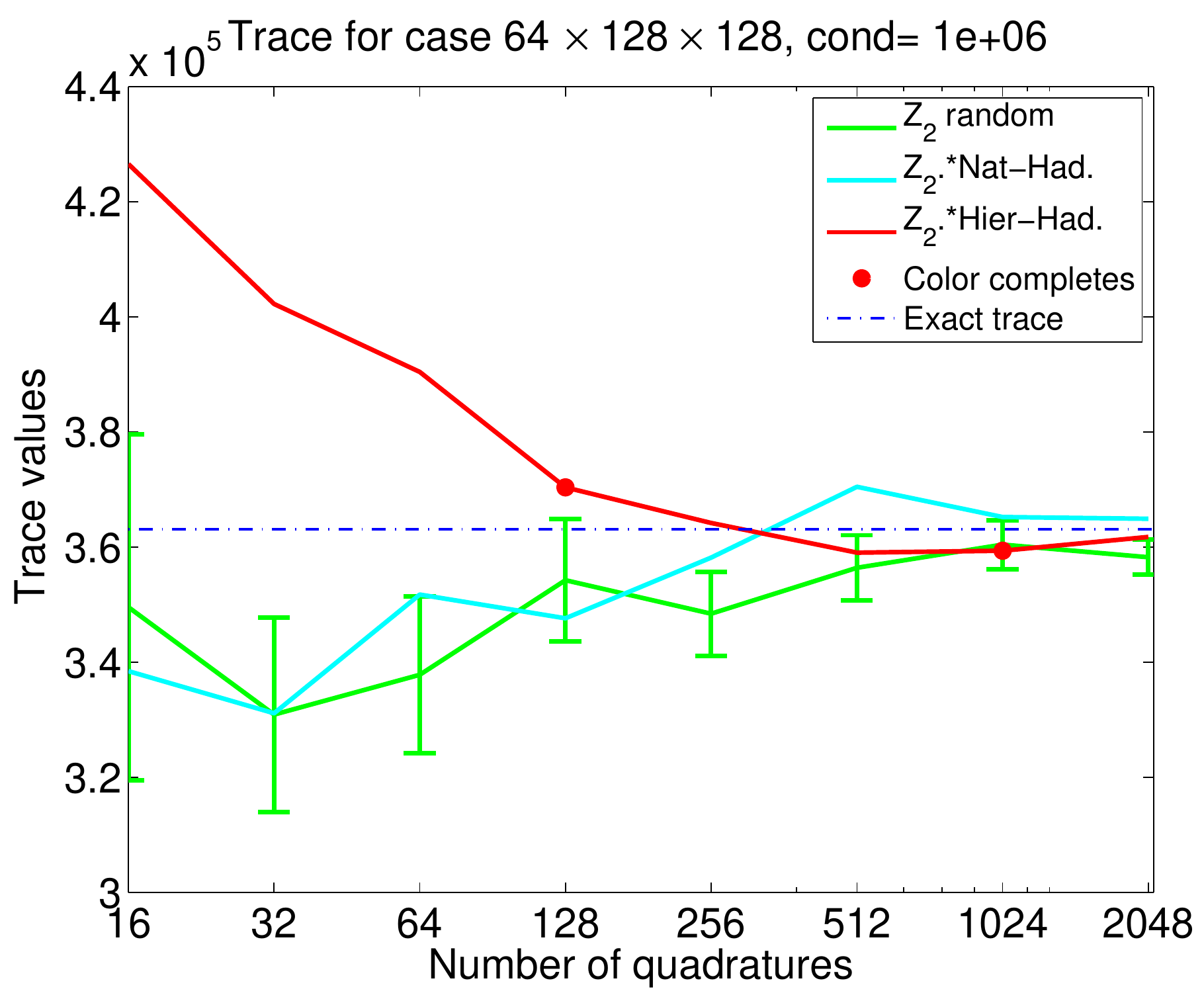}
\includegraphics[width=0.5\textwidth,angle=0]{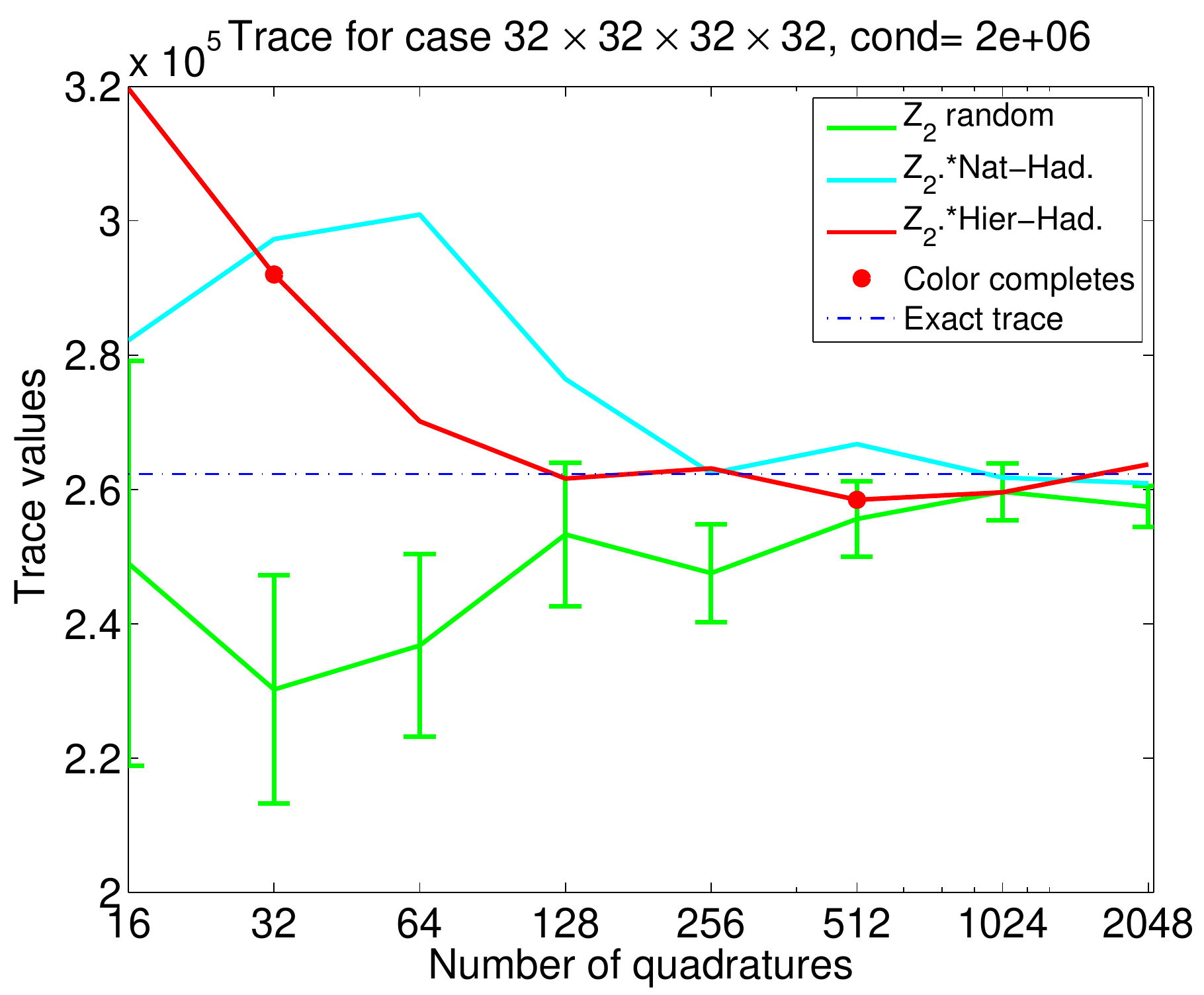}
\caption{ \label{fig:vsZ2largeCond}
Convergence history of the three estimators as in Figure \ref{fig:vsZ2smallCond}
  for a high condition number $O(10^6)$.
Even with no prominent structure in $A^{-1}$ to discover,
  hierarchical probing is as effective as the standard method.
        }
\end{figure}

Once we use a random vector $z_0$ to modify our sequence as
  $z_k\odot z_0$ (\ref{eq:removebias}), hierarchical probing becomes a
  stochastic process, whose statistical properties must be studied.
Thus, we generate $z_0^{(i)}, i=1:100,$ $\mathbb{Z}_2$ random vectors, 
  and for each one we produce a modified sequence of 
  the hierarchical probing vectors.
Then, we use the 100 values $x_k^TA^{-1}x_k$, where 
  $x_k=z_0^{(i)}\odot z_k$, at every step of the 100 MC estimators
  to calculate confidence intervals.
These are shown in Figure \ref{fig:vsZ2_variances}.
We emphasize that the confidence intervals for the $\mathbb{Z}_2$ random estimator 
  are computed differently, based on the $\overline{Var}$ estimator of 
  the preceding MC steps, so they may not be accurate initially.
Even on a 4-D problem, hierarchical probing provides a clear variance 
  improvement.

\begin{figure}[ht]
\centering
\includegraphics[width=0.5\textwidth,angle=0]{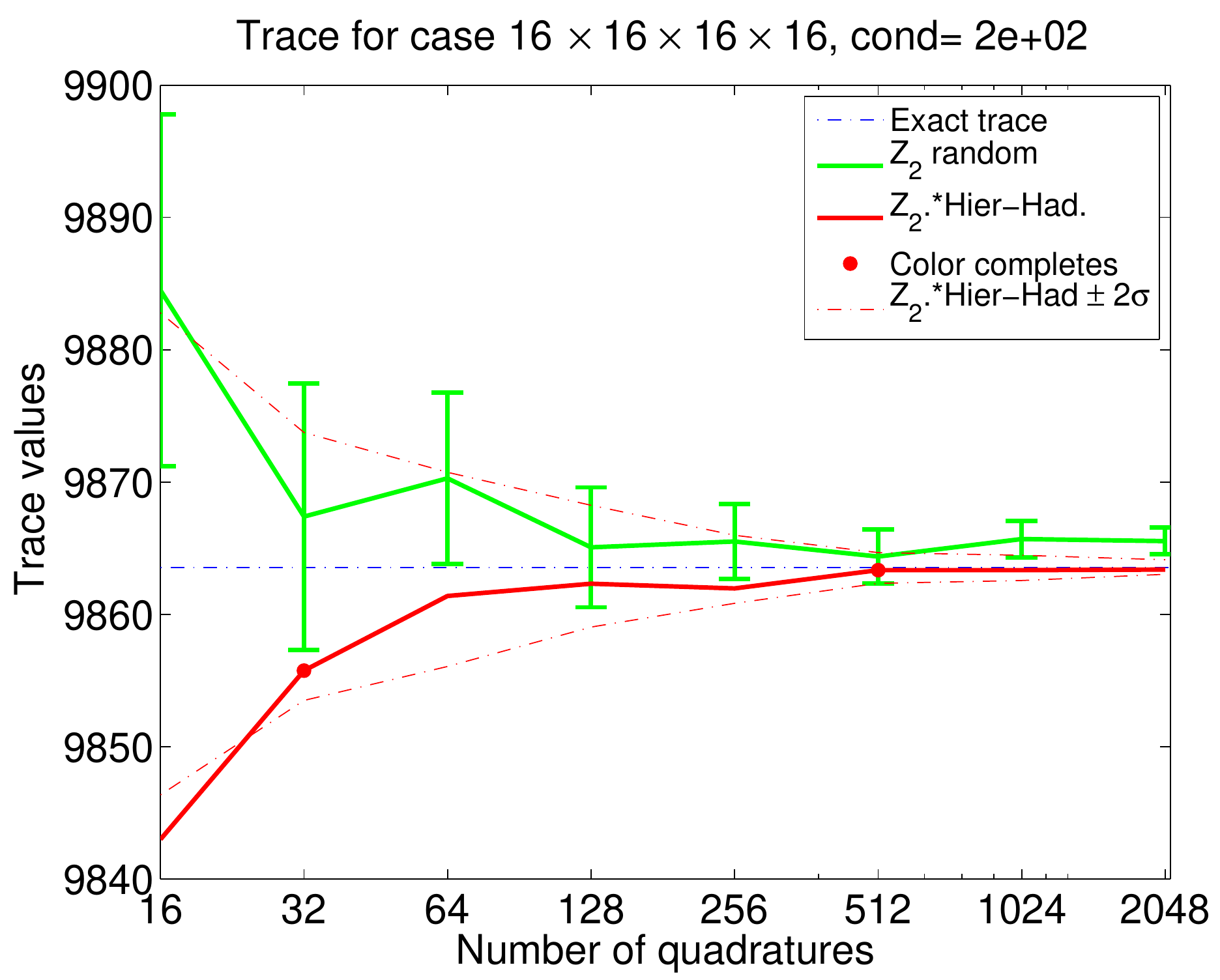}
\caption{ \label{fig:vsZ2_variances}
Providing statistics over 100 random vectors $z_0$, used to modify the 
  sequence of 2048 hierarchical probing vectors as in (\ref{eq:removebias}).
At every step, the variance of quadratures from the 100 different runs 
  is computed, and confidence intervals reported around the hierarchical 
  probing convergence.
Note that for the standard noise MC estimator confidence intervals 
  are computed differently and thus they are not directly comparable.
        }
\end{figure}

\subsection{A large QCD problem}
 The methodology presented in this paper has the potential of improving a multitude of LQCD calculations. In this section, we focus on the
 calculation of $C = \Tr(D^{-1})$, where the Dirac matrix $D$
 is a non-symmetric complex sparse matrix.
  This is representative of a larger class of calculations usually called ``{\em disconnected diagrams}".
 The physical observable $C$ is related to an important property of QCD called spontaneous chiral symmetry breaking.

Our goal is to compare the standard MC approach of computing the trace with our hierarchical probing method.
 Our test was performed on a single gauge field configuration using the Dirac matrix that  corresponds to the  ``{\em strange}" quark Dirac matrix resulting from the Clover-Wilson fermion discretization~\cite{Sheikholeslami:1985ij}. The {\em strange} quark is the third heaviest
quark flavor in nature. The gauge configuration had dimensions of $32^3\times 64$ with a lattice spacing of $a=0.11fm$, 
for a problem size of 24 million. 

First, we used an ensemble of $n=253$ noise vectors to estimate the variance
  of the standard MC method, with complete probing (dilution) of the internal color-spin space of dimension 12 to completely eliminate the 
variance due to connections in this space. Then, for each of these noise vectors, we modified as in (\ref{eq:removebias}) a sequence of hierarchical probing vectors which were generated based on space-time connections. As with the standard MC estimator, full dilution of the color-spin space was performed. This procedure was  performed in order to statistically estimate the
variance of hierarchical probing, similarly to the test in Figure~\ref{fig:vsZ2_variances}.
 In Figure~\ref{fig:qcd_var_speed}(a), we present the variance of the hierarchical probing estimator as a function
of the number of space-time probing vectors in the sequence. The main feature in this plot is that the variance drops as more
vectors are used. Local minima occur at numbers of vectors that are  powers of 2, where all connections of a given Manhattan distance are eliminated from the variance. The uncertainty of the variance, represented by the errorbars in the plot,  is estimated using the Jackknife resampling procedure of our noise vector ensemble. 

In addition to the variance, we estimate the speed-up ratio of the hierarchical probing estimator over the standard MC estimator.
We define speed-up ratio as:
$$
R_s = \frac{V_{stoc}}{V_{hp}(s)\, \times\, s}\,,
$$
where $V_{hp}(s)$ is the variance over the $n$ different runs
  when the $s$-th hierarchical probing vector is used,
and $V_{stoch}$ is the variance of the standard MC estimator as estimated 
  from $n=253$ samples.
The rescaling factor of $s$ is there to account for the fact that if one had been using a pure stochastic noise with $n\times s$ vectors,
the variance would be smaller by a factor of $s$.  Thus, the variance comparison is performed on equal amount of computation for both methods. In Figure~\ref{fig:qcd_var_speed}(b) we present the speed-up ratio $R_s$ as a function $s$. The errorbars on $R_s$ are estimated using Jackknife resampling from our ensemble of starting noise vectors. The peaks in this plot occur at the points where $s$ is a power of 2, as in the variance case. A maximum overall speed-up factor of about 10 is observed at $s=512$. Note that the color completion points for this experiment are at $s=2$, $s=32$ and $s=512$ vectors.

\begin{figure}[ht]
\includegraphics[width=0.5\textwidth]{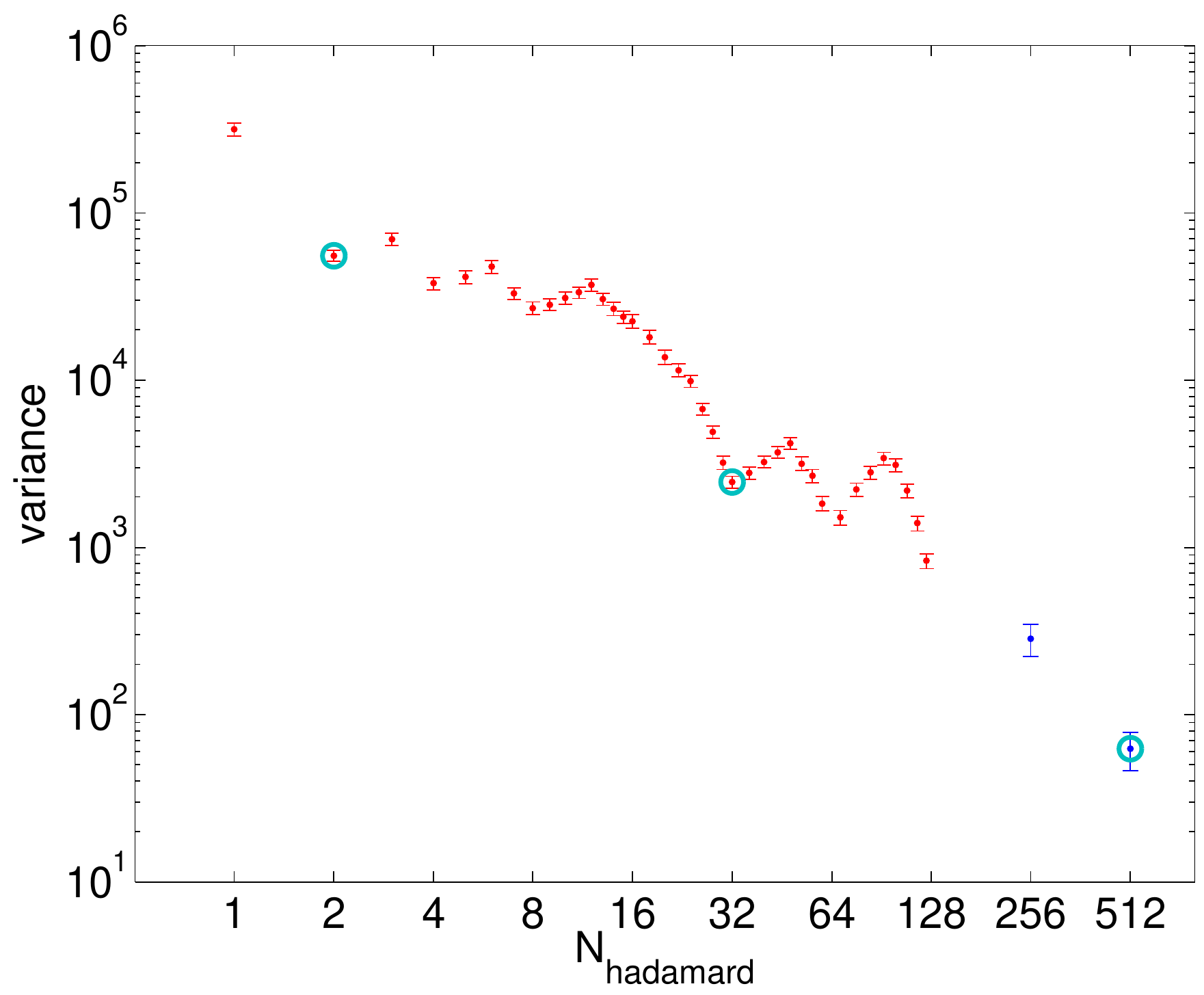}
\includegraphics[width=0.5\textwidth]{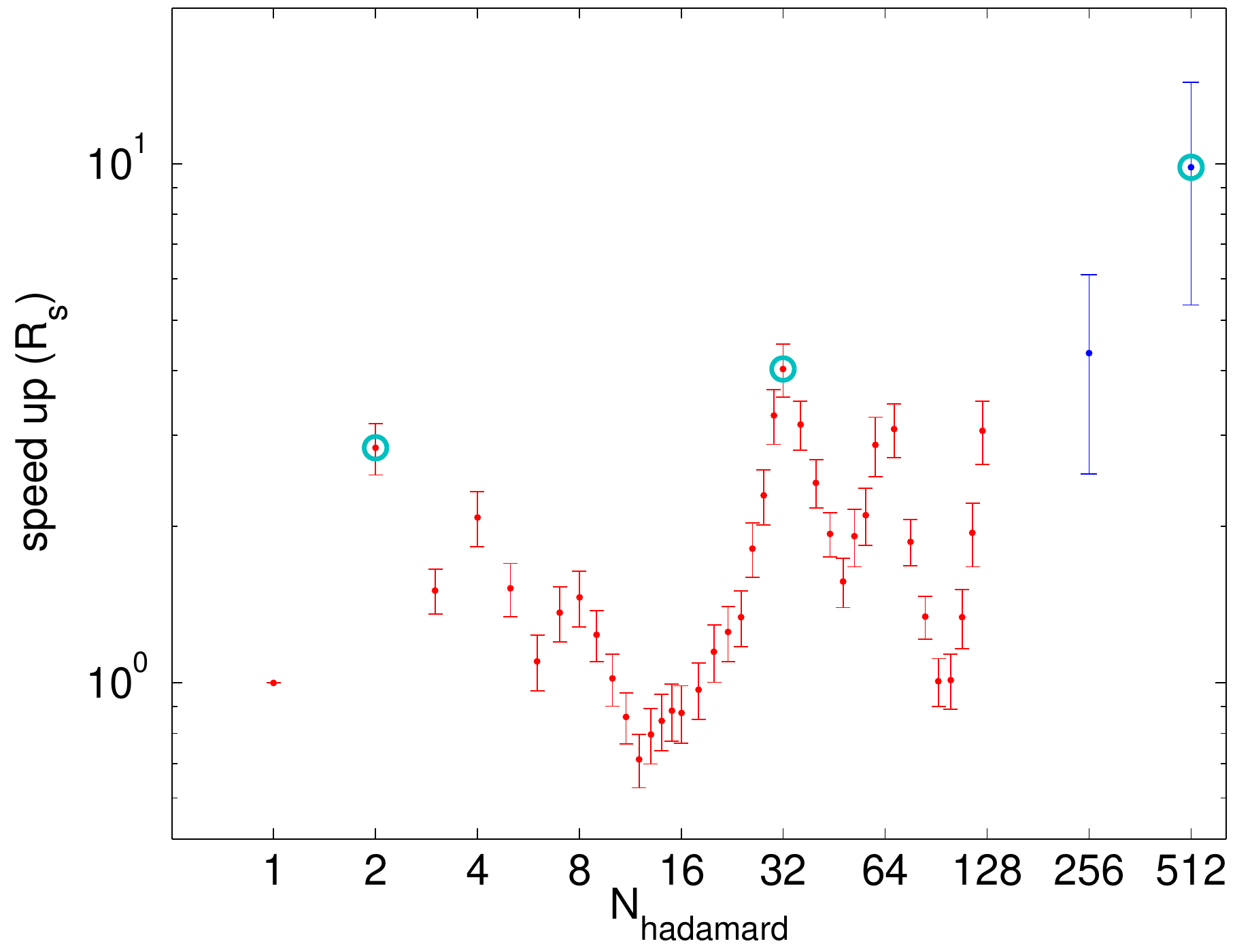}
\caption{\label{fig:qcd_var_speed} 
(a) Left: The variance of the hierarchical probing trace estimator as 
  a function of the number of vectors ($s$) used. The minima appear 
  when $s$ is a power of two. 
 The places where the colors complete are marked with the cyan circle. 
 These minima become successively deeper as we progress from 2 to 32 to 512 vectors.  
(b) Right: Speed-up of the LQCD trace calculation over the standard 
  $\mathbb{Z}_2$ MC estimator. The cyan circles mark where colors complete.
  The maximal speed up is observed at $s=512$.
In both cases the uncertainties are estimated using the Jackknife procedure 
  on a sample of 253 noise vectors, except for $s=256$ and $512$ 
  where 37 vectors were used.
}
\end{figure}

\colorred{
Finally, we report on a comparison with classic probing 
  for this large QCD problem.
There is a variety of approaches for efficient distance-2 coloring 
  in the literature \cite{Pothen_Coloring,Parallel_dist_2_color}, 
  but we have not found any extensions to distance-$k$ coloring.
To provide a realistic comparison, we have implemented two different 
  distance-$k$ coloring algorithms.
}

\colorred{
The first is a more general method that applies Dijkstra's algorithm 
  on each node to produce the lengths of the shortest paths from this 
  node to all nodes up to a selected maximum distance $k_m$.
We use a Fibonacci heap implementation to exploit sparsity.
With this array available, we can then distance-$k$ color this node 
  for one or more $k \leq k_m$, at the same time.
A parallel implementation of this naturally sequential algorithm
  is rather involved (see \cite{Parallel_dist_2_color} for the distance-2
  version), and it may result in more colors.
Instead, we run it for a maximum distance $k_m=16$ on a state-of-the-art, 
  Intel Xeon X5672, 3.2GHz server.
After several days of runtime, we extrapolated that total execution time
  would be one year!
}

\colorred{
The second method is based on the fact that the distance-$k$ neighborhood 
  of a lattice node is explicitly known geometrically.
We implemented a coloring algorithm that visits only this neighborhood
  for each node, thus achieving the minimum possible complexity
  for this problem \cite{Parallel_dist_2_color}.
The distance-$4$ coloring of our LQCD lattice produced 123 colors and took
  457 seconds %4.24 hours 
  on the above Xeon server.
Using four random vectors for each of these probing vectors (so that 
  the total number of quadratures is similar to our hierarchical probing),
  we ran 50 sets of experiments, and measured the variance of classical probing.
We found that its variance was 2.16 times larger than our hierarchical probing,
  or in other words, our method was 2.16 times faster.
This is expected as we explained earlier.
Finally, note that computing the quadratures took 4 hours on four GPUs, 
  on a dedicated machine for LQCD calculations.
Even though classic probing with distance-$4$ is feasible for this problem, 
 computing the distance-8 coloring requires 5377 seconds, which
 becomes comparable to the time for computing the quadratures. 
Contrast that to the 2 seconds needed to compute the hierarchical probing.
%Contrast that to the 4.24 hours needed to compute the classic probing 
%  permutation and to the 2 seconds needed to compute the hierarchical probing.
}

\section{Conclusions}
The motivation for this work comes from our need to compute
  $\TrAi$ for very large sparse matrices and LQCD lattices.
Current methods are based on Monte Carlo and do not sufficiently exploit 
  the structure of the matrix.
Probing is an attractive technique but it is very expensive and cannot 
  be used incrementally. 
Our research has addressed these issues.

We have developed the idea of hierarchical probing that produces
  suboptimal but nested distance-$k$ colorings recursively, for all 
  distances $k\geq 1$ up to the diameter of the graph.
We have adapted this idea to uniform lattices of any dimension 
  in a very efficient and parallelizable way.

To generate probing vectors that follow the hierarchical permutation,
  and can be used incrementally to improve accuracy,
  we have developed an algorithm that produces a specific 
  permutation of the Hadamard vectors.
This algorithm is limited to cases where the number of colors produced 
  at every level is a power of two.
We have also provided a recursive algorithm based on Fourier matrices
  that provides the appropriate sequence under the weaker assumption
  of having the same number of colors per block within a single level.
These conditions are satisfied on toroidal lattices.
Finally, we proposed an inexpensive technique to avoid deterministic 
  bias while using the above sequences of vectors.

We have performed a set of experiments in the context of computing $\TrAi$,
  and have shown that providing a hierarchical coloring for all 
  possible distances is to be preferred over classic probing 
  for a specific distance.
We also showed that our methods provide significant speed-ups over 
  the standard Monte Carlo approach.

We are currently working on combining the ideas in this paper 
  with other variance reduction techniques, in particular 
  deflation type methods.
Hierarchical coloring ideas might also be useful for general sparse
  matrices for trace computations or in preconditioning.

\section*{Acknowledgements}
Support for this work has been provided by NSF under a grant No. CCF-1218349, 
  through the Scientific Discovery through Advanced Computing (SciDAC) program funded by U.S. Department of Energy, Office of Science, Advanced Scientific Computing Research and Nuclear Physics under award number DE-FC02-12ER41890, and the Jeffress memorial grant.
Some of the experiments were performed using computational facilities at the College of William and Mary which were provided with the assistance of the National Science Foundation, the Virginia Port Authority, Sun Microsystems, and Virginia's Commonwealth Technology Research Fund.

\bibliography{}
\bibliographystyle{SIAM/siam}
\end{document}